\documentclass[journal]{IEEEtran}

\usepackage{xcolor}
\usepackage{amsmath}
\usepackage{amssymb}
\usepackage{amsthm}
\usepackage{graphicx}
\usepackage{booktabs}
\usepackage{multirow}
\usepackage{array}

\usepackage{cite}
\usepackage[hidelinks]{hyperref}
\newcommand{\figref}[1]{Fig.~\ref{#1}}

\theoremstyle{plain}

\newtheorem{proposition}{Proposition}
\newtheorem{lemma}{Lemma}
\newtheorem{corollary}{Corollary}

\theoremstyle{definition}

\title{Structured Edge-Aware Graph Attention Network for Transmitter-Resolved Pointwise Radio Map Estimation}

\author{Ang Li, Chengyu Liu, and Yue Wang%
\thanks{Ang Li, Chengyu Liu, and Yue Wang are with China Telecom Research Institute, China (e-mail: lia15@chinatelecom.cn; liucy32@chinatelecom.cn; yue.wang@chinatelecom.cn).}%
\thanks{Corresponding author: Yue Wang.}}

\begin{document}

\maketitle

\begin{abstract}
Transmitter-resolved radio map estimation (RME) from sparse measurements is essential for obtaining source-specific received-power information in wireless networks. 
This paper proposes SeaGAT, a Structured Edge-Aware Graph Attention Network for transmitter-resolved pointwise RME. 
For each target--transmitter query, SeaGAT constructs a target-centered graph from a bounded transmitter-specific reference set. 
Reference representations generate messages and provide measurement context to attention scoring, whereas structured edge representations encode explicit query--reference relations and enter only the scoring procedure. 
Within a fixed sampled support, we derive a differential identity that decomposes local variation of the evidence aggregate into attention-weighted message changes and score-induced weight redistribution; edge-relation perturbations act only through the latter, whereas a reference-RSS perturbation can affect both pathways.
The bounded star graph yields graph-side computation linear in the selected reference count $K$, while support-truncation analysis bounds deviation from the same-parameter full-support aggregate by the product of omitted attention mass and cross-support message diameter. 
Extensive computer simulations evaluated with ray-tracing dataset show that SeaGAT achieves the lowest mean RMSE compared with baselines; replacing learned attention with uniform weights increases RMSE by $1.58$--$2.14$~dB. It also maintains stable transfer across the tested urban-layout and carrier-frequency shifts.
\end{abstract}

\begin{IEEEkeywords}
Radio map estimation, graph attention networks, sparse measurements, received signal strength, transmitter-resolved radio maps, relation-conditioned evidence aggregation.
\end{IEEEkeywords}

\section{Introduction}\label{sec:intro}

\IEEEPARstart{R}{adio} maps (RMs) describe the spatial distribution of radio quantities such as received signal strength (RSS), power spectral density, or channel gain over a geographic region \cite{2023_SpectrumSurvying,2022_Magzine,2024_TheoreticalAnalysisRME, 2024_ChannelKnowledgeMap}. 
They provide a compact representation of the radio environment and support a variety of operations, including network planning \cite{2022_NetworkPlanning}, resource management \cite{2019_ResourceManagement}, localization \cite{2023_Localization}, channel state information feedback reduction \cite{2025_CSIFeedbackReduction}, V2X mobility services \cite{2024_LSTM}, and UAV-assisted communication tasks such as trajectory planning \cite{2023_UAV}. 
In practical scenarios, dense radio fields are rarely available directly; instead, RME aims to infer unobserved radio quantities from sparse and irregular measurements. 
This scarcity has also motivated sample-efficient acquisition strategies, e.g., Bayesian active learning for selecting informative measurements \cite{2024_BayesianActiveLearning}.

Most existing RME formulations focus on reconstructing a dense spatial field. However, many practical applications require the received power associated with a designated transmitter at selected target locations, rather than the complete map over the entire area, e.g., when evaluating candidate serving cells, querying link quality, or comparing transmitter-specific coverage at specific points. 
This distinction is important in multi-transmitter environments where an occupancy radio map represents the aggregated contribution of multiple transmitters, whereas a transmitter-resolved radio map describes the propagation field associated with an individual transmitter \cite{2024_TheoreticalAnalysisRME,2022_Magzine}. 
From an algorithmic perspective, the difference is reflected by whether transmitter-related characteristics are explicitly incorporated into the inference process. 
Many RME methods treat the radio map as a generic spatial field and can be applied to either aggregated or transmitter-specific maps once measurements are provided. 
However, transmitter-resolved estimation requires modeling source-dependent relevance because the spatial correlation of an individual transmitter's field can be jointly affected by transmitter geometry, propagation paths, and blockage conditions.
A transmitter-agnostic formulation may reconstruct spatial variations effectively, but it does not explicitly describe how reference evidence should be transferred for a designated transmitter query. 
This paper focuses on transmitter-resolved pointwise RSS estimation and investigates how transmitter-specific information can be integrated into sparse measurement inference.

\begin{figure}[!t]
	\centering
	\includegraphics[width=0.45\textwidth]{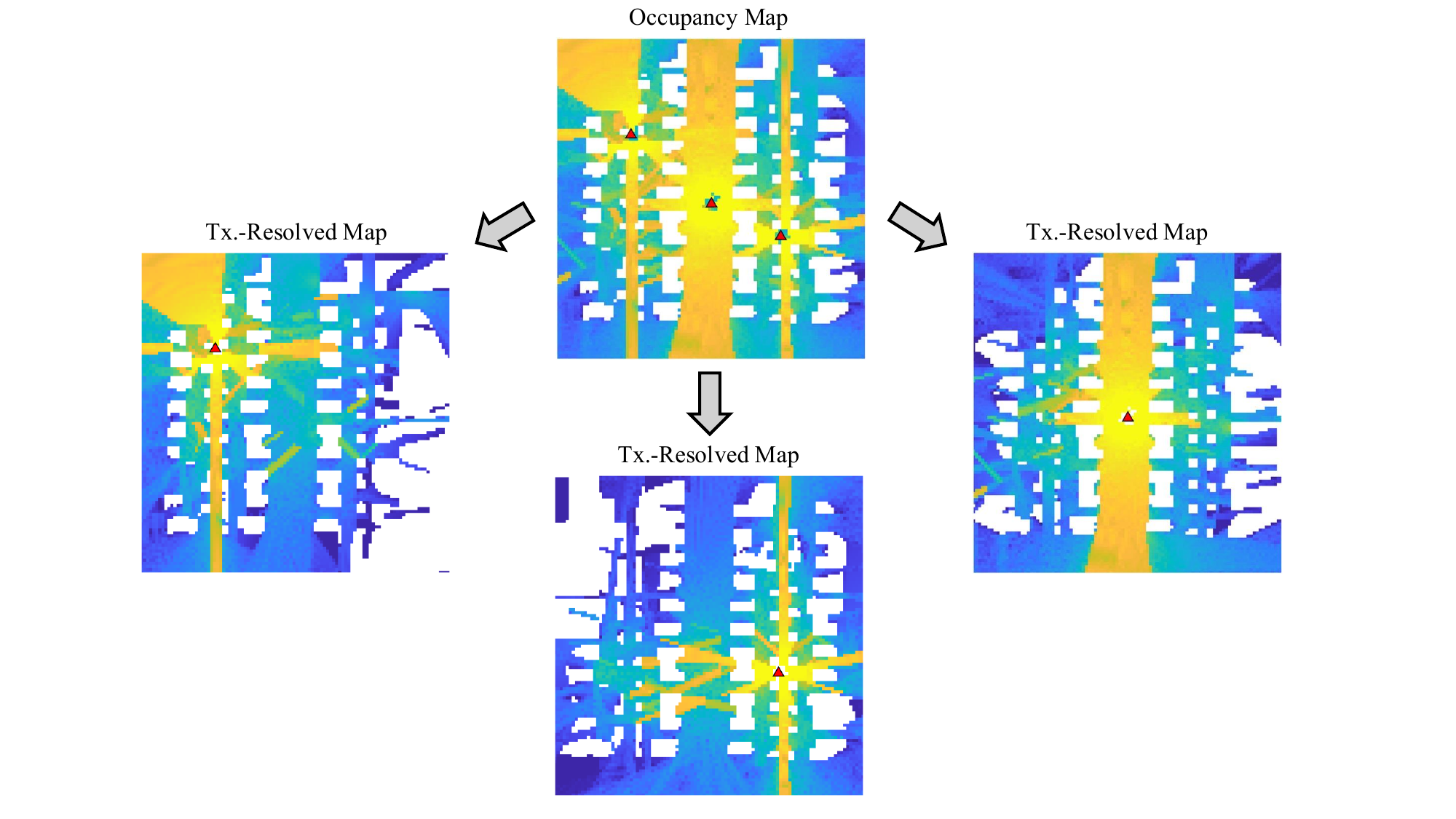}
	\caption{Illustration of the difference between the occupancy radio map and transmitter-resolved maps. A transmitter-resolved radio map describes the spatial received power associated with one designated transmitter (red triangles), whereas an occupancy radio map summarizes the aggregated radio power over the same area.}
	\label{fig:rm_concept}
\end{figure}

Classical RME methods typically estimate unobserved locations by imposing structural priors on the spatial field. 
Depending on the adopted prior, classical RME methods infer missing values by exploiting different forms of spatial structures. Representative examples include kernel-based interpolation methods, such as inverse distance weighting (IDW) \cite{2008_InverDistanceWeight}, which utilize predefined spatial similarity functions; kriging methods model spatial correlations through covariance functions \cite{1990_Kriging,2003_KrigingBook}; and matrix or tensor completion approaches recover discretized radio maps based on low-rank assumptions \cite{2022_MatrixCompletion,2022_TensorCompletion,2017_SVM,2017_RBF}.
Integrated formulations have further coupled spatial interpolation with matrix completion in a unified convex-optimization model \cite{2025_MatrixCompletion}. Hybrid approaches combine statistical interpolation with data-adaptive components, including neural-network residual kriging \cite{2019_KrigingFFNN} and KNNGP, which combines local neighbor selection with Gaussian-process modeling for radio map construction \cite{2026_KNNGaussian_RME}. 
These approaches' inference mechanism remains governed by manually selected structural priors and does not explicitly model transmitter-dependent propagation characteristics.

Learning-based RME methods replace handcrafted spatial priors with trainable representations learned from measurement data. 
Early approaches retain grid-based output representations and use convolutional reconstruction networks, encoder--decoder architectures, completion networks, or generative models to recover dense radio maps \cite{2018_RecNet,2021_RadioUNet,2022_DCAE,2023_RME_GAN,2024_RobUNet,2026_DeformAttention_RME,2025_2Stage,2023_CellLevel}. 
Some methods additionally incorporate environmental information or transmitter locations as auxiliary inputs. 
For instance, RadioUNet explicitly conditions pathloss prediction on transmitter locations and urban environments \cite{2021_RadioUNet}; DAT-Unet uses deformable attention to adapt feature extraction to sparse spatial observations while retaining a U-Net-style map output \cite{2026_DeformAttention_RME};
and physics-inspired generative frameworks introduce propagation-related features to guide map reconstruction \cite{2025_2Stage}. 
These methods improve nonlinear representation capability; however, their primary objective remains dense map generation rather than query-specific transmitter-resolved inference. 
Transfer learning has also been studied for adapting radio-map predictors across antenna-tilt configurations \cite{2020_TransferLearning}.

Recent studies further relax the fixed-grid assumption by directly predicting radio quantities at arbitrary locations. 
ICR learns continuous representations of emitter propagation patterns \cite{2025_GridFreeUnsupervised_RME}, while attention-based query estimators such as STORM \cite{2025_STORM} and CGFormer \cite{CGFormer} infer target values from sparse reference measurements without predefined grids. 
These methods are aligned with pointwise prediction tasks, but the aggregation between reference measurements and target queries is mainly determined by generic spatial representations. As a result, query-specific relations between reference observations and designated target links remain implicit.

On the other hand, graph-based learning provides a natural framework for representing sparse and irregular measurements \cite{2018_GraphSignalProcess}. Existing graph methods exploit geometric, environmental, or propagation-related correlations to improve radio map reconstruction. For instance, 5GNN constructs local and global measurement graphs to capture spatial dependencies \cite{2022_5GNN}; RadioGAT incorporates propagation-model-based graph encoding for multi-band radiomap reconstruction \cite{2024_RadioGAT}; and Dual-GRE combines physics-enhanced graph attention with CNN refinement for dense reconstruction \cite{2025_DualGRE}. These studies demonstrate the effectiveness of graph learning for RME. Nevertheless, existing graph formulations are mainly designed for dense reconstruction, multi-band completion, or smoothing over pre-constructed graphs, rather than explicitly learning transmitter-conditioned relations for pointwise queries.
For transmitter-resolved pointwise RME, the central question is how to determine which reference observations are relevant to a target location under a designated transmitter. 
This motivates an explicit architectural decomposition between the message and relation pathways. 
Node representations carry measurement-bearing evidence and also provide observation context to the relevance score, while edge representations encode physical query--reference relations that modulate the score without entering the transferred message. 
Existing attention and graph estimators generally leave this transmitter-conditioned relation pathway implicit.

To address this gap, we propose a Structured Edge-Aware Graph Attention Network (SeaGAT) for transmitter-resolved pointwise RSS estimation.
For each target--transmitter query, SeaGAT samples a bounded transmitter-specific reference set based on which a target-centered graph is constructed. 
Reference observations are encoded through node representations, while structured edge representations capture query--reference relations. 
SeaGAT uses compact geometric relations available from the query, references, and transmitter. 
The attention mechanism combines node and edge information to produce relation- and measurement-dependent weights, while only node-derived messages are transferred to the target. 
This separation makes the aggregation mechanism analytically tractable. We interpret its softmax weights as an entropy-regularized barycentric allocation and derive a unified differential identity that distinguishes direct message variation from score-induced weight redistribution. 
It is shown that edge-relation changes shift the aggregate through the weighting pathway, whereas an RSS perturbation acts through both message content and relevance scoring. 
We further quantify how bounded reference support deviates from the same-parameter full-support operator.

The main contributions of this paper are summarized as follows:
\begin{itemize}
    \item We formulate transmitter-resolved pointwise RME as a query-conditioned evidence transfer problem and propose a bounded target-centered star graph accordingly. Transmitter-specific observations generate node messages, while explicit physical relations act through the attention score to control how those messages are transferred to the queried target.

    \item Using an entropy-regularized barycentric view of softmax, we establish how SeaGAT's explicit message--relation separation governs reference influence. Edge relations alter the aggregate solely through normalized weight redistribution, whereas a reference RSS affects both the transferred message and its relevance score; we also analyze the effect of bounded support by the omitted full-support attention mass and the separation between retained and omitted messages.

    \item We develop compact relation encodings that combine compressive distance bucketization, circular bearing interpolation, and optional link-state agreement for efficient query-dependent inference.
\end{itemize}

\section{Preliminaries and Problem Setup}\label{sec:system}

This section first formalizes the transmitter-resolved pointwise RME problem and identifies the design requirements it imposes on the estimator. 
It then introduces the target-centered graph sampling and the transmitter-conditioned node and edge attributes used later in the architecture.

\subsection{Transmitter-Resolved Pointwise RME}
\label{subsec:problem_formulation}
Consider a geographical area $A \subset \mathbb{R}^2$ where a set of transmitters $B$ are deployed.
For each transmitter $b \in B$, let $\mathbf{p}_b \in \mathbb{R}^2$ denote its coordinates.
Each query target $t$ is associated with a location $\mathbf{p}_t \in A$, and its received signal power with respect to transmitter $b$ is denoted by $y_t^{(b)}$.
The objective is to estimate received power only for selected target--transmitter queries directly.
Let $r$ denote the reference observation index and let $R_b^{\mathrm{obs}}$ denote the index set of transmitter-specific reference observations for transmitter $b$.
The corresponding sparse observation set is expressed by
\begin{equation}
O_b
=
\left\{
\left(\mathbf{p}_r, y_r^{(b)}\right) : r \in R_b^{\mathrm{obs}}
\right\}.
\end{equation}
The transmitter-resolved pointwise estimation problem for an arbitrary target location\footnote{We consider only outdoor target locations in this paper.}
is written as
\begin{equation}
\hat{y}_t^{(b)} = f_\Theta\!\left(\mathbf{p}_t,\mathbf{p}_b; O_b\right),
\end{equation}
where $f_\Theta(\cdot)$ denotes the estimator parameterized by $\Theta$.

In our considered setting, the estimator design is therefore centered on how to extract query-relevant evidence from the transmitter-specific observation set. 
This leads to three requirements for the estimator architecture.
First, the estimator should remain transmitter-aware, since the same target location may correspond to very different received power values for different designated transmitters in a multi-transmitter environment.
Second, once a query-dependent local reference set is formed, the selected references have no canonical ordering.
Hence, the estimator should not change its output merely because the same references are listed in a different order \cite{2017_DeepSets}, and it should also remain meaningful when the number of participating references varies.
Third, given a target location and a designated transmitter, the estimator still needs to determine which transmitter-specific references should contribute and how strongly they should be weighted.
In this process, the observed RSS at a reference is the evidence to be transferred, whereas quantities such as the target--reference distance, the bearing, and, when available, the consistency between the propagation patterns experienced by the target and reference links serve as relation attributes that indicate the relevance of that reference to the current query.

\subsection{Target-Centered Graph Sampling}

SeaGAT operates on a query-dependent reference set generated by a bounded sampling operator,
\begin{equation}
\label{ReferenceSet}
N_t^{(b)}
=
\mathcal{S}_K\!\left(\mathbf{p}_t,O_b\right),
\qquad
\left|N_t^{(b)}\right|=K,
\end{equation}
where $\mathcal{S}_K$ selects $K$ transmitter-specific references for the current query.
The downstream encoders and attention module use the attributes of the selected references and therefore support different implementations of $\mathcal{S}_K$.
This work instantiates the operator by Euclidean $k$-nearest-neighbor sampling,
\begin{equation}
\mathcal{S}_K\!\left(\mathbf{p}_t,O_b\right)
=
\mathrm{kNN}_K\!\left(\mathbf{p}_t,O_b\right),
\end{equation}
which controls the graph size and favors references geographically close to the target.
The observation pool $O_b$ can be formed through different acquisition policies, including random sampling and cluster-based representative sampling; their effects are evaluated in Section~\ref{sec:results}.

Based on $N_t^{(b)}$, the target-centered graph is defined as
\begin{equation}
\mathcal{G}_t^{(b)} = \left(\mathcal{V}_t^{(b)}, \mathcal{E}_t^{(b)}\right),
\end{equation}
with
\begin{align}
\mathcal{V}_t^{(b)} &= \{t\} \cup N_t^{(b)},\\
\mathcal{E}_t^{(b)} &= \left\{(r,t) : r \in N_t^{(b)}\right\}.
\end{align}
The directed target-centered star topology follows from the pointwise task i.e., every edge represents reference-to-target evidence transfer for one queried target, while the sampling operator determines the candidate evidence support. The current $k$-NN implementation adds a geographical locality prior and bounds the per-query computation. 
Other bounded query-dependent samplers can replace $\mathcal{S}_K$ without changing the subsequent encoders or aggregation. 
Accordingly, the graph contains only reference-to-target edges, and each query is answered by direct aggregation over its bounded evidence support.

\subsection{Node and Edge Attributes}
Since each query is conditioned on a designated transmitter, SeaGAT expresses node and edge attributes in a transmitter-conditioned form.
For each $r \in N_t^{(b)}$, we calculate its transmitter-centered coordinate as $\mathbf{p}_r^{(b)} = \mathbf{p}_r - \mathbf{p}_b$, and the corresponding reference observation vector is expressed as
\begin{equation}
\label{NodeFeature}
\mathbf{u}_r^{(b)}
=
\left[
\left(\mathbf{p}_r^{(b)}\right)^\top,\;
y_r^{(b)}
\right]^\top.
\end{equation}

For each sampled directed edge $(r,t)$, we write its target--reference distance and bearing relations as
$d_{t,r} = \left\|\mathbf{p}_r - \mathbf{p}_t\right\|$
and $\theta_{t,r} = \mathrm{wrap}\!\big(\angle(\mathbf{p}_r - \mathbf{p}_t)\big)$,
where $\mathrm{wrap}(\cdot)$ maps the angle to $(-\pi,\pi]$.
The corresponding geometric relation vector is then written as
\begin{equation}
\label{EdgeFeature}
\boldsymbol{\phi}_{t,r}
=
\left[
d_{t,r},\;
\theta_{t,r}
\right]^\top.
\end{equation}
While \eqref{EdgeFeature} adopts the target--reference distance and bearing as the default relation set, transmitter-anchored relations of the same form---e.g., the reference--transmitter and target--transmitter distance--bearing pairs---can be appended without modifying the encoders introduced in Section~\ref{sec:method}; the effect of this extended relation set is evaluated in Section~\ref{sec:results}.

When link-state indicators are available as side information, the propagation-pattern consistency identified in Section~\ref{subsec:problem_formulation} can be instantiated as a link-state agreement variable.
Let $\ell_t^{(b)},\, \ell_r^{(b)} \in \{\mathrm{LoS}, \mathrm{NLoS}\}$
denote the link state of the target--transmitter and reference--transmitter links, respectively.
Their agreement pattern is defined as
\begin{equation}
\label{StateAgreement}
a_{t,r}^{(b)}
=
\begin{cases}
\mathrm{LL}, & \ell_t^{(b)}=\mathrm{LoS},\;\; \ell_r^{(b)}=\mathrm{LoS},\\
\mathrm{NN}, & \ell_t^{(b)}=\mathrm{NLoS},\; \ell_r^{(b)}=\mathrm{NLoS},\\
\mathrm{LN}, & \ell_t^{(b)}=\mathrm{LoS},\;\; \ell_r^{(b)}=\mathrm{NLoS},\\
\mathrm{NL}, & \ell_t^{(b)}=\mathrm{NLoS},\; \ell_r^{(b)}=\mathrm{LoS}.
\end{cases}
\end{equation}
It is worth noting that this variable differs from the inter-node link connectivity used in building-aware graph construction \cite{2023_GNNLetter}.
When such link-state indicators are unavailable, SeaGAT operates using geometric relation attributes only.

The variables in \eqref{NodeFeature}--\eqref{StateAgreement} enter message passing through complementary representations.
The reference vector $\mathbf{u}_r^{(b)}$ carries measurement-bearing evidence, while $\boldsymbol{\phi}_{t,r}$ and $a_{t,r}^{(b)}$ describe the reference relation to the current target--transmitter query.
The node encoder maps $\mathbf{u}_r^{(b)}$ to $\mathbf{h}_r^{(b)}$, and the edge encoder maps the relation variables to $\mathbf{e}_{t,r}^{(b)}$ in Section~\ref{sec:method}.
The resulting processing pipeline is summarized in \figref{fig:blockdiagram}.

\section{Structured Edge-Aware GAT Architecture}\label{sec:method}

\begin{figure}[!t]
	\centering
	\includegraphics[width=0.47\textwidth]{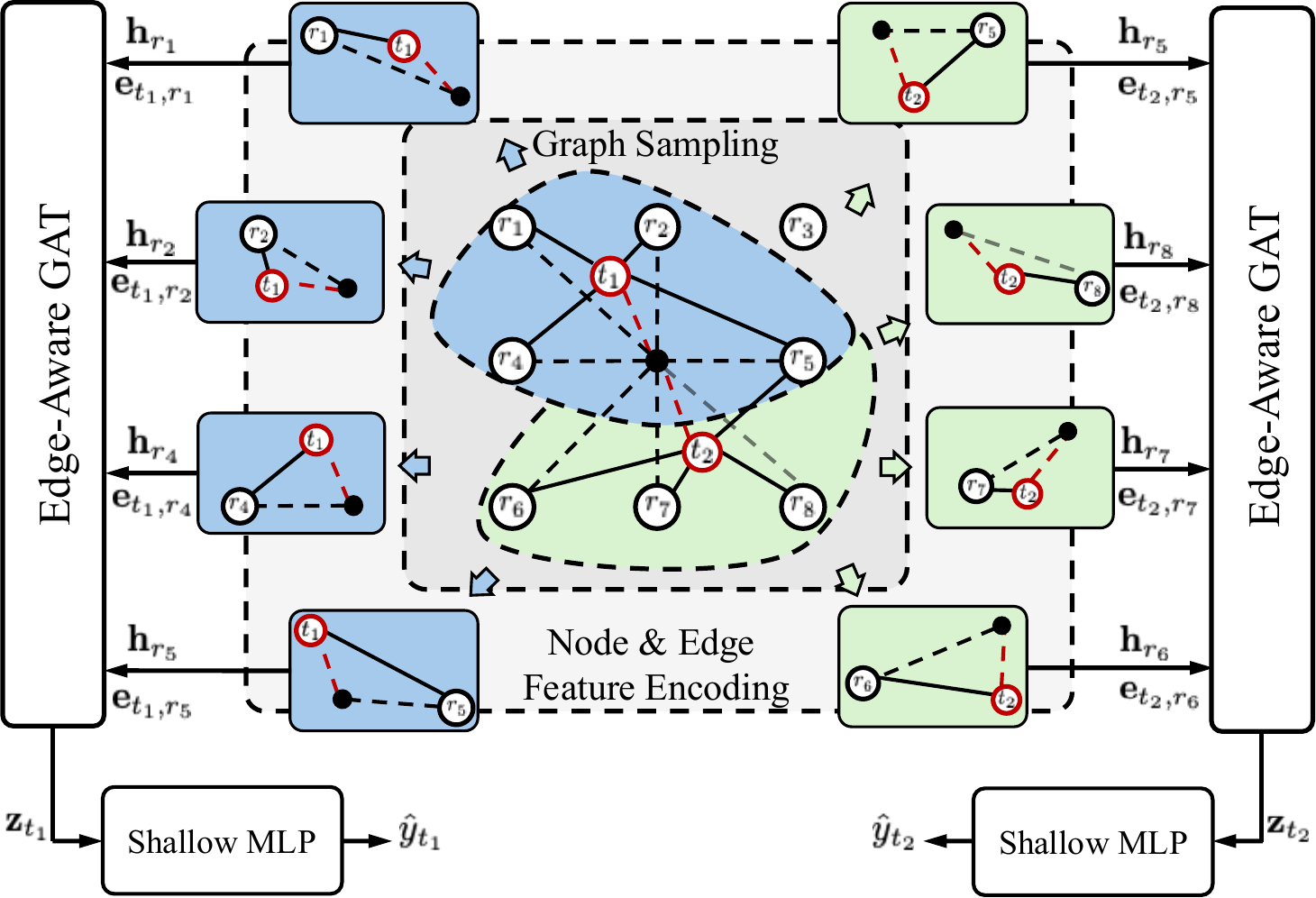}
	\caption{Overview of SeaGAT. For each target--transmitter query, SeaGAT performs graph sampling from a transmitter-specific reference pool, derives node features and target--reference edge features, applies relation-aware graph attention to aggregate transmitter-specific evidence, and outputs the pointwise RSS estimate through a shallow MLP. The transmitter index $b$ is omitted for simplicity, and the black-filled circle denotes the transmitter. }
\label{fig:blockdiagram}
\end{figure}

\figref{fig:blockdiagram} illustrates the proposed SeaGAT architecture through a toy example. 
Specifically, two targets are queried against a transmitter-specific reference pool of size $|O_b|=8$.
For each queried target, $K=4$ nearby references are selected by the $k$-NN sampling to form the target-centered local graph.
The reference observations and the target--reference relations are then encoded into node and edge representations, respectively, and fused by the structured edge-aware attention to aggregate the transmitter-specific evidence. 
Finally, a shallow MLP maps the aggregated target representation to the pointwise RSS estimate.

\subsection{Node and Edge Feature Encoding}

To facilitate the following discussion, we define $\mathrm{Lin}_{\psi}(\mathbf{x}) \triangleq \mathbf{W}_{\psi}\mathbf{x} + \mathbf{b}_{\psi}$ as a linear transformation indexed by $\psi$, let $\mathrm{MLP}_{\psi}(\cdot)$ denote a lightweight multilayer perceptron, and let $\mathbf{o}_k$ denote the one-hot vector associated with bucket index $k$.
Recall from \eqref{NodeFeature} that each reference-side observation vector $\mathbf{u}_r^{(b)}$ consists of the transmitter-centered coordinate $\mathbf{p}_r^{(b)}$ and the observed RSS $y_r^{(b)}$.
SeaGAT encodes these two components through separate branches before fusion:
\begin{align}
\mathbf{g}_r^{(b)} &= \mathrm{Lin}_{u}\!\left(\mathbf{p}_r^{(b)}\right),\\
\mathbf{q}_r^{(b)} &= \mathrm{Lin}_{y}\!\left(y_r^{(b)}\right).
\end{align}
The final node feature vector takes the following form
\begin{equation}
\mathbf{h}_r^{(b)}
=
\mathrm{MLP}_{\mathrm{ref}}
\!\left(
\left[
\left(\mathbf{g}_r^{(b)}\right)^\top,\;
\left(\mathbf{q}_r^{(b)}\right)^\top
\right]^\top
\right).
\end{equation}

For each sampled directed edge $(r,t)$, SeaGAT maps the edge-side relation variables to a compact representation through quantized encodings.
The distance relation is encoded by uniform bucketization after a monotone companding transform, i.e.,
\begin{equation}
\label{DistEmbed}
\mathbf{v}_{t,r}^{(d)}
=\mathbf{W}_{d}\,\mathbf{o}_{Q_d(d_{t,r})},
\end{equation}
where $Q_d(d_{t,r})$ denotes the resulting bucket index, i.e.,
\begin{equation}
\label{DistQuant}
Q_d(d_{t,r})
=
\left\lfloor
B_d\,\frac{d_{t,r}}{c+d_{t,r}}\right\rfloor+1.
\end{equation}
$B_d$ denotes the configured number of distance buckets, and $c>0$ is a fixed range-scale constant whose value is specified in Section~\ref{sec:results}.
The bearing relation is periodic and is therefore encoded by a circular soft quantization.
Specifically, the bearing is first mapped to a fractional bucket position
\begin{equation}
\label{AngPos}
\rho_{t,r}
=
\frac{B_{\theta}\left(\theta_{t,r}+\pi\right)}{2\pi}
\bmod B_{\theta},
\end{equation}
where $B_{\theta}$ denotes the number of bearing buckets.
The encoded bearing relation is then obtained by linearly interpolating the embeddings of the two adjacent circular buckets, i.e.,
\begin{equation}
\label{AngEmbed}
\mathbf{v}_{t,r}^{(\theta)}
=
\left(1-\lambda_{t,r}\right)\mathbf{W}_{\theta}\,\mathbf{o}_{k_{t,r}^{-}}
+
\lambda_{t,r}\,\mathbf{W}_{\theta}\,\mathbf{o}_{k_{t,r}^{+}},
\end{equation}
where $k_{t,r}^{-} = \lfloor \rho_{t,r} \rfloor + 1$, $k_{t,r}^{+} = \left(k_{t,r}^{-} \bmod B_{\theta}\right) + 1$, and $\lambda_{t,r} = \rho_{t,r} - \lfloor \rho_{t,r} \rfloor$.

The two quantizers address different relation geometries. The distance mapping in \eqref{DistQuant} is a uniform quantizer after the monotone warping
\begin{equation}
u(d)=\frac{d}{c+d},\qquad d\geq 0,
\end{equation}
which maps the unbounded distance axis to $[0,1)$; and its inverse-warped boundary takes the following form,
\begin{equation}
\tau_j=\frac{cj}{B_d-j},
\end{equation}
with $j=0,\ldots,B_d-1$. 
For $k=1,\ldots,B_d-1$, the width of bucket $k$ can be obtained as
\begin{equation}
\Delta d_k=\tau_k-\tau_{k-1}
=\frac{cB_d}{(B_d-k)(B_d-k+1)}.
\label{eq:bucket_width}
\end{equation}
The equation above indicates that $\Delta d_k$ increases w.r.t $k$, such that the first bucket has width $c/(B_d-1)\approx c/B_d$, and the final bucket extends to infinity. The encoding therefore requires no scenario-dependent clipping and allocates the finest resolution to nearby references. 
For bearing, the interpolation in \eqref{AngEmbed} is continuous and periodic across both bucket boundaries and the $\pm\pi$ wrap-around while retaining table-lookup efficiency; adjacent-bin soft assignment is a standard way to suppress quantization artifacts for periodic variables \cite{2005_HOG}.
As for the link-state agreement, the categorical variable $a_{t,r}^{(b)}$ defined in \eqref{StateAgreement} is mapped to a learned embedding, i.e.,
\begin{equation}
\label{LinkEmbed}
\mathbf{v}_{t,r}^{(\ell)}
=
\mathbf{w}_{a_{t,r}^{(b)}},
\end{equation}
where one embedding vector is learned for each agreement category in $\{\mathrm{LL},\mathrm{NN},\mathrm{LN},\mathrm{NL}\}$.
Accordingly, the edge feature vector is written as
\begin{equation}
\mathbf{r}_{t,r}^{(b)}
=
\begin{cases}
\left[
\mathbf{v}_{t,r}^{(d)\top},\;
\mathbf{v}_{t,r}^{(\theta)\top}
\right]^\top, & \text{w/o link state},
\\[2mm]
\left[
\mathbf{v}_{t,r}^{(d)\top},\;
\mathbf{v}_{t,r}^{(\theta)\top},\;
\mathbf{v}_{t,r}^{(\ell)\top}
\right]^\top, & \text{otherwise}.
\end{cases}
\end{equation}
The final edge representation is then
\begin{equation}
\mathbf{e}_{t,r}^{(b)}
=
\mathrm{MLP}_{\mathrm{edge}}\!\left(\mathbf{r}_{t,r}^{(b)}\right).
\end{equation}

\subsection{Relation-Aware Message Passing}
Recall that node representations determine the transferred messages and provide measurement context to the attention score, whereas edge representations enter only the scoring pathway. Physical relations therefore redistribute the barycentric coefficients assigned to existing node messages without directly injecting message content.

For each $r \in N_t^{(b)}$, the attention score is computed as
\begin{equation}
\label{eq:attention_score}
s_{t,r}^{(b)}
=
\mathbf{w}_{\mathrm{att}}^\top
\psi
\!\left(
\mathrm{Lin}_{\mathrm{att}}
\!\left(
\left[
\left(\mathbf{h}_r^{(b)}\right)^\top,\;
\left(\mathbf{e}_{t,r}^{(b)}\right)^\top
\right]^\top
\right)
\right),
\end{equation}
where $\psi(\cdot)$ denotes the LeakyReLU activation.
The normalized attention weight is then
\begin{equation}
\label{eq:attention_weight}
\alpha_{t,r}^{(b)}
=
\frac{
\exp\!\left(s_{t,r}^{(b)}/\tau\right)
}{
\sum_{r' \in N_t^{(b)}}
\exp\!\left(s_{t,r'}^{(b)}/\tau\right)
},
\end{equation}
with $\tau>0$ denoting a fixed temperature hyperparameter.
The node-only message is
\begin{equation}
\mathbf{m}_r^{(b)} = \mathrm{Lin}_{\mathrm{msg}}\!\left(\mathbf{h}_r^{(b)}\right).
\end{equation}
Define the pre-activation evidence aggregate as
\begin{equation}
\label{eq:preactivation_aggregate}
\widetilde{\mathbf{z}}_t^{(b)}
=
\sum_{r \in N_t^{(b)}}
\alpha_{t,r}^{(b)}\mathbf{m}_r^{(b)},
\end{equation}
and the target representation and pointwise prediction take the following form
\begin{align}
\mathbf{z}_t^{(b)}
&=\tanh\!\left(\widetilde{\mathbf{z}}_t^{(b)}\right),
\label{eq:SeaGAT_target_rep}\\
\hat{y}_t^{(b)}
&=\mathrm{MLP}_{\mathrm{pred}}\!\left(\mathbf{z}_t^{(b)}\right)
\triangleq g_{\Theta}\!\left(\widetilde{\mathbf{z}}_t^{(b)}\right).
\label{eq:SeaGAT_prediction}
\end{align}
Because every reference is scored pointwise and \eqref{eq:preactivation_aggregate} uses summation, the mapping is thus invariant to the ordering of the selected references. 
A common translation of the coordinate frame applied to $\mathbf{p}_t$, $\mathbf{p}_b$, and all $\mathbf{p}_r$ also leaves the transmitter-centered coordinates and target--reference relations unchanged. 
These structural invariances follow directly from the relative coordinates and set aggregation.

\subsection{Analysis of Relation-Conditioned Aggregation}
\label{subsec:kernel_view}

Equation~\eqref{eq:preactivation_aggregate} is the first stage at which the node-message and relation-conditioned scoring pathways interact, i.e., messages enter linearly, while physical relations affect the aggregate only through the normalized weights. 
Analyzing this aggregate therefore isolates SeaGAT's evidence-allocation mechanism before the downstream nonlinearity and readout, whose local sensitivity can subsequently be revealed through the chain rule.
Throughout Sections~\ref{subsec:kernel_view} and~\ref{sec:arch}, we fix the designated transmitter $b$, set $O=O_b$ and $R^{\mathrm{obs}}=R_b^{\mathrm{obs}}$, and suppress the superscript from all transmitter-specific quantities. The target index $t$ is retained only where it distinguishes query-specific quantities. 
For the generic softmax allocation, differential identity, and single-score perturbation below, we additionally write $\mathcal{N}=N_t$, $s_r=s_{t,r}$, $\alpha_r=\alpha_{t,r}$, and $\widetilde{\mathbf{z}}=\widetilde{\mathbf{z}}_t$. 

Let $\boldsymbol{\alpha}$ and $\mathbf{s}$ collect $\{\alpha_r\}_{r\in\mathcal{N}}$ and $\{s_r\}_{r\in\mathcal{N}}$, respectively, define the probability simplex $\Delta_K=\{\mathbf{a}\succeq\mathbf{0}:\mathbf{1}^{\top}\mathbf{a}=1\}$, and let $H(\mathbf{a})=-\sum_r a_r\log a_r$ denote Shannon entropy. 
The softmax allocation in \eqref{eq:attention_weight} is equivalently characterized by the entropy-regularized problem \cite{2017_RegularizedAttention_Niculae,2020_FenchelYoung_Blondel}
\begin{equation}
\boldsymbol{\alpha}
=
\arg\max_{\mathbf{a}\in\Delta_K}
\left\{
\mathbf{a}^{\top}\mathbf{s}
+\tau H(\mathbf{a})
\right\}.
\label{eq:entropy_allocation}
\end{equation}
The linear term $\mathbf{a}^{\top}\mathbf{s}$ rewards allocation to high-scoring references, whereas $H(\mathbf{a})$ discourages complete concentration. 
The temperature $\tau$ controls the balance between score preference and allocation entropy. 
For finite scores and $\tau>0$, the objective is strictly concave, and its first-order condition gives $a_r\propto\exp(s_r/\tau)$; enforcing $\sum_r a_r=1$ recovers \eqref{eq:attention_weight} as the unique maximizer. 
Since \eqref{eq:preactivation_aggregate} is a convex combination of the node messages, SeaGAT can be interpreted as enabling an entropy-regularized barycentric allocation of reference evidence.

At the operator level, the same aggregate can be written as a learned normalized smoother in the latent message space. Following the normalized-smoothing form of Nadaraya--Watson regression \cite{1964_Kernel_Nadaraya,1964_Kernel_Geoffrey}, define the nonnegative relevance function
\begin{align}
\kappa_{\Theta}(t,r;O)
&=\mathbb{I}\!\left\{r\in N_t\right\}
\exp\!\left(s_{t,r}/\tau\right),\\
\bar{\kappa}_{\Theta}(t,r;O)
&=\frac{\kappa_{\Theta}(t,r;O)}
{\sum_{j\in R^{\mathrm{obs}}}\kappa_{\Theta}(t,j;O)}.
\label{eq:normalized_kernel}
\end{align}
Then
\begin{equation}
\widetilde{\mathbf{z}}_t
=
\sum_{r\in R^{\mathrm{obs}}}
\bar{\kappa}_{\Theta}(t,r;O)\mathbf{m}_r,
\label{eq:kernel_smoother_latent}
\end{equation}
which retains the normalized structure of classical kernel smoothing while operating on learned messages. Here, $\kappa_{\Theta}$ denotes a nonnegative relevance function in the Nadaraya--Watson sense. Within the sampled support, it depends on $\mathbf{p}_r-\mathbf{p}_b$, $y_r$, $d_{t,r}$, $\theta_{t,r}$, and optional $c_{t,r}$, making the resulting smoother transmitter-conditioned and measurement-dependent.

\begin{lemma}[]
\label{lem:differential_identity}
For arbitrary infinitesimal perturbations of the messages and scores, the aggregate differential satisfies
\begin{equation}
d\widetilde{\mathbf{z}}
=
\sum_{r\in\mathcal{N}}\alpha_r\,d\mathbf{m}_r
+
\frac{1}{\tau}
\sum_{r\in\mathcal{N}}
\alpha_r\left(\mathbf{m}_r-\widetilde{\mathbf{z}}\right)ds_r.
\label{eq:differential_identity}
\end{equation}
\end{lemma}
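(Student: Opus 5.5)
The plan is a direct first-order expansion of the aggregate \eqref{eq:preactivation_aggregate}, treating the messages $\{\mathbf{m}_r\}$ and scores $\{s_r\}$ as independent variables on the fixed support $\mathcal{N}$. The sampled support does not change under infinitesimal perturbations, so $\widetilde{\mathbf{z}}=\sum_{r\in\mathcal{N}}\alpha_r\mathbf{m}_r$ is a smooth function of finitely many variables. Applying the product rule termwise gives
\begin{equation*}
d\widetilde{\mathbf{z}}=\sum_{r\in\mathcal{N}}\alpha_r\,d\mathbf{m}_r+\sum_{r\in\mathcal{N}}\mathbf{m}_r\,d\alpha_r .
\end{equation*}
The first sum is already the message term of \eqref{eq:differential_identity}. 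The rest of the argument reduces $d\alpha_r$ to score differentials.

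For the second term I would use the softmax form \eqref{eq:attention_weight}. Since $\alpha_r$ depends only on $\mathbf{s}$, I would take logarithms: $\log\alpha_r=s_r/\tau-\log\sum_{j}\exp(s_j/\tau)$. Differentiating gives
\begin{equation*}
d\log\alpha_r=\frac{1}{\tau}\Bigl(ds_r-\sum_{j\in\mathcal{N}}\alpha_j\,ds_j\Bigr),
\end{equation*}
that is, $d\alpha_r=\frac{\alpha_r}{\tau}\bigl(ds_r-\sum_j\alpha_j ds_j\bigr)$. This is the standard softmax Jacobian $\partial\alpha_r/\partial s_j=\tau^{-1}\alpha_r(\delta_{rj}-\alpha_j)$. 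It is well defined because all scores are finite and $\tau>0$, so every $\alpha_r>0$.

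Substituting this into $\sum_r\mathbf{m}_r d\alpha_r$ gives $\frac{1}{\tau}\sum_r\alpha_r\mathbf{m}_r ds_r-\frac{1}{\tau}\bigl(\sum_r\alpha_r\mathbf{m}_r\bigr)\bigl(\sum_j\alpha_j ds_j\bigr)$. I would then recognize $\sum_r\alpha_r\mathbf{m}_r=\widetilde{\mathbf{z}}$ and relabel $j\to r$ in the second sum, which merges the two pieces into $\frac{1}{\tau}\sum_r\alpha_r(\mathbf{m}_r-\widetilde{\mathbf{z}})ds_r$.

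None of these steps is a real obstacle. The only point needing care is to state that the support $\mathcal{N}$ is held fixed, so the perturbations act only on messages and scores, not on the $k$-NN selection. Under that convention the aggregate is differentiable, and the identity holds as an equality of differentials. As a sanity check, I would note that the centering by $\widetilde{\mathbf{z}}$ reflects $\sum_r d\alpha_r=0$. Consequently, a uniform score shift $ds_r\equiv c$ produces no redistribution, which is consistent with the shift invariance of softmax.
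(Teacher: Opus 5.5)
Your proposal is correct and follows the paper's proof essentially step for step: the product rule on $\widetilde{\mathbf{z}}=\sum_r\alpha_r\mathbf{m}_r$, substitution of the softmax differential $d\alpha_r=\frac{\alpha_r}{\tau}\bigl(ds_r-\sum_j\alpha_j\,ds_j\bigr)$, and the identification $\sum_r\alpha_r\mathbf{m}_r=\widetilde{\mathbf{z}}$ to merge the two score terms. The only cosmetic difference is that you derive the softmax differential explicitly through $d\log\alpha_r$, whereas the paper states it directly.
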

\begin{proof}
Differentiating the softmax gives
\begin{equation}
d\alpha_r
=
\frac{\alpha_r}{\tau}
\left(ds_r-\sum_{j\in\mathcal{N}}\alpha_jds_j\right).
\label{eq:softmax_differential}
\end{equation}
Applying the product rule to \eqref{eq:preactivation_aggregate} and substituting \eqref{eq:softmax_differential} yields
\begin{align}
d\widetilde{\mathbf{z}}
&=\sum_r\alpha_r\,d\mathbf{m}_r
+\sum_r\mathbf{m}_r\,d\alpha_r\notag\\
&=\sum_r\alpha_r\,d\mathbf{m}_r
+\frac{1}{\tau}
\left(
\sum_r\alpha_r\mathbf{m}_rds_r
-\widetilde{\mathbf{z}}\sum_r\alpha_rds_r
\right),
\end{align}
which rearranges to \eqref{eq:differential_identity}.
\end{proof}
Equation~\eqref{eq:differential_identity} decomposes aggregate variation into two pathways. 
The first term is the attention-weighted direct variation of the message contents under the current allocation. 
The second term captures score-induced weight redistribution, i.e., a perturbation of $s_r$ shifts the aggregate through the offset $\mathbf{m}_r-\widetilde{\mathbf{z}}$, which points from the current aggregate to reference message $r$. 
Adding the same constant to every score leaves the softmax weights unchanged. 
Equivalently, $\sum_{r\in\mathcal{N}}\alpha_r(\mathbf{m}_r-\widetilde{\mathbf{z}})=\mathbf{0}$, therefore a common score shift contributes no aggregate variation.
\begin{proposition}[]
\label{prop:relation_score_shift}
Fix one reference $r\in\mathcal{N}$. Suppose a change in its edge relation changes only $s_r$ to $s_r+\Delta$, while all messages and the other scores remain fixed. Let $q=\exp(\Delta/\tau)$ and denote the resulting aggregate by $\widetilde{\mathbf{z}}(\Delta)$, with $\widetilde{\mathbf{z}}(0)=\widetilde{\mathbf{z}}$. Then
\begin{equation}
\widetilde{\mathbf{z}}(\Delta)
-
\widetilde{\mathbf{z}}
=
\frac{\alpha_r(q-1)}
{1+\alpha_r(q-1)}
\left(\mathbf{m}_r-\widetilde{\mathbf{z}}\right).
\label{eq:relation_score_shift}
\end{equation}
\end{proposition}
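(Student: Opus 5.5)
The plan is to compute the perturbed softmax weights in closed form rather than integrate the differential identity of Lemma~\ref{lem:differential_identity}. Integrating would also work, but it is more roundabout.

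First, write the new weights after $s_r\mapsto s_r+\Delta$. Let $Z=\sum_{j\in\mathcal{N}}\exp(s_j/\tau)$ be the original normalizer. Only the $r$-th exponential changes, and it is multiplied by $q$. The new normalizer is therefore
\[
Z'=Z+(q-1)\exp(s_r/\tau).
\]
Hence $Z'/Z=1+\alpha_r(q-1)$. Set $D=1+\alpha_r(q-1)$. The perturbed weights are $\alpha_j'=\alpha_j/D$ for $j\neq r$ and $\alpha_r'=q\alpha_r/D$. Since $q>0$ and $\alpha_r\in(0,1]$, we have $D>0$, so everything is well defined.

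Second, form the new aggregate with the messages unchanged:
\[
\widetilde{\mathbf{z}}(\Delta)=\frac{1}{D}\Bigl(\sum_{j}\alpha_j\mathbf{m}_j+(q-1)\alpha_r\mathbf{m}_r\Bigr)=\frac{\widetilde{\mathbf{z}}+\alpha_r(q-1)\mathbf{m}_r}{D}.
\]
Subtracting $\widetilde{\mathbf{z}}=D\widetilde{\mathbf{z}}/D$ leaves the numerator
\[
\widetilde{\mathbf{z}}+\alpha_r(q-1)\mathbf{m}_r-\bigl(1+\alpha_r(q-1)\bigr)\widetilde{\mathbf{z}}=\alpha_r(q-1)(\mathbf{m}_r-\widetilde{\mathbf{z}}).
\]
Dividing by $D$ gives \eqref{eq:relation_score_shift} directly.

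There is no real obstacle. The only care points are using the hypothesis that all messages and the other scores are fixed, and confirming $D>0$. As a sanity check, the prefactor's derivative at $\Delta=0$ is $\alpha_r/\tau$, which matches the single-score term of \eqref{eq:differential_identity}.
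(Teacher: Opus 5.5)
Your proposal is correct and follows the paper's proof essentially step for step. Both arguments rescale only the $r$-th exponential by $q$ and write the new normalizer as $Z\bigl[1+\alpha_r(q-1)\bigr]$, noting this factor is positive. Both then express the perturbed aggregate as $\bigl(\widetilde{\mathbf{z}}+\alpha_r(q-1)\mathbf{m}_r\bigr)/D$ and subtract $\widetilde{\mathbf{z}}$ to obtain the stated identity.
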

\begin{proof}
Let $u_i=\exp(s_i/\tau)$ and $Z=\sum_i u_i$, so that $\alpha_i=u_i/Z$. After the perturbation, only $u_r$ changes, from $u_r$ to $q u_r$. The new normalizer is therefore
\begin{equation}
Z'
=
Z+(q-1)u_r
=
Z\left[1+\alpha_r(q-1)\right].
\label{eq:perturbed_denominator}
\end{equation}
Define $D_\Delta=1+\alpha_r(q-1)=(1-\alpha_r)+\alpha_r q>0$. The perturbed weights are then $\alpha_r'=q\alpha_r/D_\Delta$ and $\alpha_i'=\alpha_i/D_\Delta$ for $i\neq r$. Separating the $r$-th contribution from the remaining messages gives
\begin{align}
\widetilde{\mathbf{z}}(\Delta)
&=
\frac{q\alpha_r\mathbf{m}_r+
\sum_{i\neq r}\alpha_i\mathbf{m}_i}{D_\Delta}
\notag\\
&=
\frac{\widetilde{\mathbf{z}}+
\alpha_r(q-1)\mathbf{m}_r}{D_\Delta}.
\label{eq:perturbed_aggregate}
\end{align}
Since $D_\Delta-1=\alpha_r(q-1)$, subtracting $\widetilde{\mathbf{z}}$ from the last expression yields \eqref{eq:relation_score_shift}.
\end{proof}
For $\Delta>0$, the coefficient in \eqref{eq:relation_score_shift} lies in $(0,1)$, so the aggregate moves toward $\mathbf{m}_r$; for $\Delta<0$, it moves in the opposite direction. As $\Delta\rightarrow+\infty$, the aggregate approaches $\mathbf{m}_r$. When $K\geq2$, the limit $\Delta\rightarrow-\infty$ removes reference $r$ and renormalizes the remaining messages. The corresponding local response is
\begin{equation}
\lim_{\Delta\rightarrow0}
\frac{\widetilde{\mathbf{z}}(\Delta)-\widetilde{\mathbf{z}}}{\Delta}
=
\frac{\alpha_r}{\tau}
\left(\mathbf{m}_r-\widetilde{\mathbf{z}}\right),
\label{eq:relation_score_shift_local}
\end{equation}
which agrees with Lemma~\ref{lem:differential_identity}. Equation~\eqref{eq:relation_score_shift} is also an exact score-shift relation for finite $\Delta$. It therefore remains valid when bucketized edge encoding produces a discrete score jump, even when the score is nondifferentiable with respect to the underlying distance or bearing.

\begin{proposition}[]
\label{prop:measurement_influence}
Fix a selected reference $r\in N_t$. At any point where the
message and raw-score mappings are differentiable with respect
to $y_r$, the local sensitivity of the pre-activation aggregate
satisfies
\begin{equation}
\frac{\partial\widetilde{\mathbf{z}}_t}{\partial y_r}
=
\alpha_{t,r}
\frac{\partial\mathbf{m}_r}{\partial y_r}
+
\frac{\alpha_{t,r}}{\tau}
\left(\mathbf{m}_r-\widetilde{\mathbf{z}}_t\right)
\frac{\partial s_{t,r}}{\partial y_r}.
\label{eq:measurement_influence}
\end{equation}
\end{proposition}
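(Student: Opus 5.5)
The plan is to specialize Lemma~\ref{lem:differential_identity} to a perturbation of the single scalar $y_r$. The key structural fact is how $y_r$ enters the architecture. It appears only in the node observation vector $\mathbf{u}_r$, through $\mathbf{q}_r=\mathrm{Lin}_y(y_r)$, and hence only in $\mathbf{h}_r$. The representation $\mathbf{h}_r$ in turn feeds exactly two quantities: the message $\mathbf{m}_r=\mathrm{Lin}_{\mathrm{msg}}(\mathbf{h}_r)$ and the score $s_{t,r}$ in \eqref{eq:attention_score}.

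The first step is to record which quantities do \emph{not} depend on $y_r$. For $j\neq r$, both $\mathbf{h}_j$ and $\mathbf{e}_{t,j}$ are built from $\mathbf{p}_j$, $y_j$, $\mathbf{p}_t$ and $\mathbf{p}_b$ alone. The edge representation $\mathbf{e}_{t,r}$ is built from $d_{t,r}$, $\theta_{t,r}$ and the optional link-state agreement, none of which involve $y_r$. The support $N_t=\mathrm{kNN}_K(\mathbf{p}_t,O)$ depends only on positions, so an infinitesimal change of $y_r$ leaves the sampled support fixed. Consequently $d\mathbf{m}_j=\mathbf{0}$ and $ds_{t,j}=0$ for $j\neq r$, while $d\mathbf{m}_r=(\partial\mathbf{m}_r/\partial y_r)\,dy_r$ and $ds_{t,r}=(\partial s_{t,r}/\partial y_r)\,dy_r$. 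Both derivatives exist by the differentiability assumption in the statement; this assumption is what handles the LeakyReLU kink and any nonsmooth activations inside $\mathrm{MLP}_{\mathrm{ref}}$.

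Next, substitute these differentials into \eqref{eq:differential_identity}. Only the $r$-th summand survives in each of the two sums. Dividing by $dy_r$, or equivalently applying the chain rule through the differentiable map $y_r\mapsto(\mathbf{m}_r,s_{t,r})$ composed with the smooth map $(\mathbf{m},\mathbf{s})\mapsto\widetilde{\mathbf{z}}$, gives \eqref{eq:measurement_influence} directly. As a consistency check, setting $\partial\mathbf{m}_r/\partial y_r=\mathbf{0}$ should recover \eqref{eq:relation_score_shift_local} with $\Delta$ replaced by the induced score change.

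The only step needing care is the locality argument: verifying that $y_r$ does not leak into other references' scores or into the normalization in any way other than through $s_{t,r}$. This holds because scores are computed pointwise per reference, and the shared softmax normalizer is already accounted for by the $-\widetilde{\mathbf{z}}_t$ offset in the lemma. Support invariance follows because $k$-NN selection uses positions only. Everything else is a routine application of the lemma.
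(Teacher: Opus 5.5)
Your proposal is correct and follows the paper's proof essentially step for step: you fix the support because the $k$-NN sampler uses positions only, you use pointwise per-reference scoring to show that only $d\mathbf{m}_r$ and $ds_{t,r}$ are nonzero, and you substitute into Lemma~\ref{lem:differential_identity} so that only the $r$-th summands survive. Your explicit chain-rule phrasing and the consistency check against \eqref{eq:relation_score_shift_local} are sound additions, but they do not change the argument.
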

\begin{proof}
Consider a point at which the node and score mappings are differentiable. The selected support $N_t$ remains fixed under an infinitesimal change in $y_r$, because the $k$-NN sampler used in this work depends on the reference coordinates rather than their RSS values. Moreover, the message and raw score are computed pointwise for each reference before softmax normalization. Hence, for every $i\in N_t$,
\begin{align}
d\mathbf{m}_i
&=
\mathbb{I}\{i=r\}
\frac{\partial\mathbf{m}_r}{\partial y_r}\,dy_r,
\notag\\
ds_{t,i}
&=
\mathbb{I}\{i=r\}
\frac{\partial s_{t,r}}{\partial y_r}\,dy_r.
\label{eq:rss_local_differentials}
\end{align}
Substituting \eqref{eq:rss_local_differentials} into Lemma~\ref{lem:differential_identity} leaves only the $r$-th term in each sum and gives
\begin{equation*}
d\widetilde{\mathbf{z}}_t
=
\left[
\alpha_{t,r}
\frac{\partial\mathbf{m}_r}{\partial y_r}
+
\frac{\alpha_{t,r}}{\tau}
\left(\mathbf{m}_r-\widetilde{\mathbf{z}}_t\right)
\frac{\partial s_{t,r}}{\partial y_r}
\right]dy_r.
\end{equation*}
Dividing by $dy_r$ yields \eqref{eq:measurement_influence}.
\end{proof}
The first term in \eqref{eq:measurement_influence} changes the message content contributed by reference $r$, while the second changes how strongly that reference is weighted through its relevance score. 
They reinforce when their inner product is positive and partially cancel when it is negative; exact cancellation requires equal magnitudes and opposite directions. 
Mechanistically, cancellation occurs when the RSS-induced change in the transferred message is opposed by the score-induced movement of the aggregate.
For example, a reference located across a blockage boundary may exhibit an RSS change that shifts its encoded message in one direction, while the learned scorer interprets the resulting node--relation combination as less relevant to the target and reduces its attention weight. 
The resulting redistribution can oppose the direct message change, producing partial cancellation.

\subsection{Bounded-Support Deviation and Complexity}
\label{sec:arch}

Lemma~\ref{lem:differential_identity} and Propositions~\ref{prop:relation_score_shift}--\ref{prop:measurement_influence} characterize message and score perturbations while the sampled support $N_t$ remains fixed. We now quantify the discrete aggregate change caused by replacing the full observation pool $R^{\mathrm{obs}}$ with the bounded support $N_t$. 
Since SeaGAT computes each raw score and message pointwise before normalization, evaluating the same learned functions on both supports leaves the retained scores and messages unchanged and isolates the effect of support truncation. Define
\begin{align}
\alpha_{t,r}^{\mathrm{full}}
&=\frac{\exp(s_{t,r}/\tau)}
{\sum_{j\in R^{\mathrm{obs}}}\exp(s_{t,j}/\tau)},\\
\widetilde{\mathbf{z}}_t^{\mathrm{full}}
&=\sum_{r\in R^{\mathrm{obs}}}
\alpha_{t,r}^{\mathrm{full}}\mathbf{m}_r,\\
\varepsilon_{t,K}
&=\sum_{r\in R^{\mathrm{obs}}\setminus N_t}\alpha_{t,r}^{\mathrm{full}}.
\label{eq:omitted_mass}
\end{align}
For $R^{\mathrm{obs}}\setminus N_t\neq\varnothing$, define the cross-support message diameter
\begin{equation}
D_{m,t,K}^{\mathrm{cross}}
=
\max_{\substack{i\in N_t\\j\in R^{\mathrm{obs}}\setminus N_t}}
\left\|\mathbf{m}_i-\mathbf{m}_j\right\|_2,
\label{eq:cross_support_diameter}
\end{equation}
Set $D_{m,t,K}^{\mathrm{cross}}=0$ when $R^{\mathrm{obs}}\setminus N_t=\varnothing$, and let $D_m=\max_{i,j\in R^{\mathrm{obs}}}\|\mathbf{m}_i-\mathbf{m}_j\|_2$ denote the full message-set diameter.

\begin{corollary}[]
\label{cor:bounded_support}
The actual $K$-reference aggregate in \eqref{eq:preactivation_aggregate} satisfies
\begin{equation}
\left\|
\widetilde{\mathbf{z}}_t^{\mathrm{full}}
-
\widetilde{\mathbf{z}}_t
\right\|_2
\leq
\varepsilon_{t,K}D_{m,t,K}^{\mathrm{cross}}
\leq
\varepsilon_{t,K}D_m.
\label{eq:bounded_support}
\end{equation}
\end{corollary}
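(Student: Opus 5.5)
The plan is to write the full-support aggregate as a two-component mixture: one component is the bounded-support aggregate, the other is the omitted-support barycenter. Both supports use the same pointwise scores $s_{t,r}$ and messages $\mathbf{m}_r$. The restricted full-support weights therefore satisfy $\alpha_{t,r}^{\mathrm{full}}=(1-\varepsilon_{t,K})\alpha_{t,r}$ for $r\in N_t$, since both sides equal $\exp(s_{t,r}/\tau)$ divided by the respective normalizers. Here $1-\varepsilon_{t,K}=\sum_{r\in N_t}\alpha^{\mathrm{full}}_{t,r}$ is exactly the ratio of the $N_t$-normalizer to the full normalizer. The case $R^{\mathrm{obs}}\setminus N_t=\varnothing$ is trivial: then $\varepsilon_{t,K}=0$, the two aggregates coincide, and both sides of \eqref{eq:bounded_support} vanish. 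Otherwise $\varepsilon_{t,K}>0$, because all exponentials are positive.

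For that case, define the omitted barycenter
\begin{equation*}
\bar{\mathbf{m}}_t^{\mathrm{om}}=\frac{1}{\varepsilon_{t,K}}\sum_{j\in R^{\mathrm{obs}}\setminus N_t}\alpha^{\mathrm{full}}_{t,j}\mathbf{m}_j .
\end{equation*}
The decomposition then reads $\widetilde{\mathbf{z}}_t^{\mathrm{full}}=(1-\varepsilon_{t,K})\widetilde{\mathbf{z}}_t+\varepsilon_{t,K}\bar{\mathbf{m}}_t^{\mathrm{om}}$. Subtracting $\widetilde{\mathbf{z}}_t$ gives the exact identity
\begin{equation*}
\widetilde{\mathbf{z}}_t^{\mathrm{full}}-\widetilde{\mathbf{z}}_t=\varepsilon_{t,K}\bigl(\bar{\mathbf{m}}_t^{\mathrm{om}}-\widetilde{\mathbf{z}}_t\bigr).
\end{equation*}
This is the discrete analogue of Proposition~\ref{prop:relation_score_shift}: adding the omitted references behaves like moving the aggregate toward another message, with the omitted mass as the coefficient.

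It remains to show $\|\bar{\mathbf{m}}_t^{\mathrm{om}}-\widetilde{\mathbf{z}}_t\|_2\le D^{\mathrm{cross}}_{m,t,K}$. Both points are convex combinations: $\widetilde{\mathbf{z}}_t$ uses weights $\alpha_{t,i}$ over $i\in N_t$, and $\bar{\mathbf{m}}_t^{\mathrm{om}}$ uses weights $\beta_j=\alpha^{\mathrm{full}}_{t,j}/\varepsilon_{t,K}$ over the omitted set. Form the product coupling,
\begin{equation*}
\bar{\mathbf{m}}_t^{\mathrm{om}}-\widetilde{\mathbf{z}}_t=\sum_{i\in N_t}\sum_{j\in R^{\mathrm{obs}}\setminus N_t}\alpha_{t,i}\beta_j(\mathbf{m}_j-\mathbf{m}_i).
\end{equation*}
The triangle inequality bounds each term by $\|\mathbf{m}_j-\mathbf{m}_i\|_2\le D^{\mathrm{cross}}_{m,t,K}$, and the product weights sum to one. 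This step is the only point needing care. A naive bound would compare against an arbitrary point and lose a factor of $2$, or would only give the full diameter $D_m$. The bilinear double-sum representation is what yields the sharp cross-support diameter.

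Finally, $D^{\mathrm{cross}}_{m,t,K}\le D_m$ holds because the cross-support maximum ranges over a subset of the pairs in $R^{\mathrm{obs}}\times R^{\mathrm{obs}}$. Multiplying through by $\varepsilon_{t,K}\ge 0$ gives both inequalities in \eqref{eq:bounded_support}.
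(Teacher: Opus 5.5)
Your proposal is correct and follows the paper's proof essentially step for step. It uses the same mixture decomposition $\widetilde{\mathbf{z}}_t^{\mathrm{full}}=(1-\varepsilon_{t,K})\widetilde{\mathbf{z}}_t+\varepsilon_{t,K}\widetilde{\mathbf{z}}_{t,\mathrm{out}}$ with normalized omitted weights $\beta_{t,j}$, the same product-weight double-sum bounded by the triangle inequality, and the same final comparison $D_{m,t,K}^{\mathrm{cross}}\le D_m$; your case split on $R^{\mathrm{obs}}\setminus N_t=\varnothing$ instead of $\varepsilon_{t,K}=0$ is equivalent for finite scores.
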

\begin{proof}
If $\varepsilon_{t,K}=0$, the omitted references carry no full-support attention mass and $\widetilde{\mathbf{z}}_t^{\mathrm{full}}=\widetilde{\mathbf{z}}_t$, so \eqref{eq:bounded_support} holds. Now suppose $\varepsilon_{t,K}>0$. Since $N_t$ contains at least one reference and all finite-score softmax weights are positive, $0<\varepsilon_{t,K}<1$. The retained raw scores are identical under the full and bounded supports; therefore, for every $i\in N_t$,
\begin{equation*}
\alpha_{t,i}
=
\frac{\alpha_{t,i}^{\mathrm{full}}}
{1-\varepsilon_{t,K}}.
\end{equation*}
Define the normalized omitted-support weights $\beta_{t,j}=\alpha_{t,j}^{\mathrm{full}}/\varepsilon_{t,K}$ for $j\in R^{\mathrm{obs}}\setminus N_t$, and let
\begin{align}
\widetilde{\mathbf{z}}_{t,\mathrm{out}}
&=
\sum_{j\in R^{\mathrm{obs}}\setminus N_t}
\beta_{t,j}\mathbf{m}_j,
\notag\\
\widetilde{\mathbf{z}}_t^{\mathrm{full}}
&=
\left(1-\varepsilon_{t,K}\right)
\widetilde{\mathbf{z}}_t
+
\varepsilon_{t,K}\widetilde{\mathbf{z}}_{t,\mathrm{out}}.
\label{eq:full_support_decomposition}
\end{align}
Both $\{\alpha_{t,i}\}_{i\in N_t}$ and $\{\beta_{t,j}\}_{j\in R^{\mathrm{obs}}\setminus N_t}$ are nonnegative and sum to one. Consequently,
\begin{equation*}
\widetilde{\mathbf{z}}_{t,\mathrm{out}}
-
\widetilde{\mathbf{z}}_t
=
\sum_{\substack{i\in N_t\\
j\in R^{\mathrm{obs}}\setminus N_t}}
\alpha_{t,i}\beta_{t,j}
\left(\mathbf{m}_j-\mathbf{m}_i\right).
\end{equation*}
The coefficients in this sum are nonnegative and sum to one. The triangle inequality and the definition of $D_{m,t,K}^{\mathrm{cross}}$ therefore give $\|\widetilde{\mathbf{z}}_{t,\mathrm{out}}-\widetilde{\mathbf{z}}_t\|_2\leq D_{m,t,K}^{\mathrm{cross}}$. Subtracting $\widetilde{\mathbf{z}}_t$ from \eqref{eq:full_support_decomposition} then yields
\begin{equation*}
\left\|
\widetilde{\mathbf{z}}_t^{\mathrm{full}}
-
\widetilde{\mathbf{z}}_t
\right\|_2
=
\varepsilon_{t,K}
\left\|
\widetilde{\mathbf{z}}_{t,\mathrm{out}}
-
\widetilde{\mathbf{z}}_t
\right\|_2
\leq
\varepsilon_{t,K}D_{m,t,K}^{\mathrm{cross}}.
\end{equation*}
Finally, $D_{m,t,K}^{\mathrm{cross}}\leq D_m$ proves the second inequality in \eqref{eq:bounded_support}.
\end{proof}
Corollary~\ref{cor:bounded_support} identifies two conditions under which a bounded reference support remains close to the same-parameter full-support aggregate. First, $\varepsilon_{t,K}$ is small when the sampler retains the references that receive most of the full-support attention mass. This can occur when query-relevant propagation evidence is predominantly local and the $k$-NN support covers that neighborhood. Second, $D_{m,t,K}^{\mathrm{cross}}$ is small when the retained and omitted observations are mapped into a compact common region of latent message space, as may occur in a locally homogeneous propagation region where the omitted evidence is largely redundant. 
Although the aforementioned conditions characterize a favorable sampled support, we note that the sampler considered in this paper, e.g., $k$-NN, excludes the minimization of the omitted attention mass or the resulting deviation. 
Thus, the sampler optimization remains as an interesting piece of future work.

Let $M \triangleq |O|$ denote the size of the fixed transmitter-specific observation pool, let $K \triangleq |N_t|$ denote the number of selected references per query, and let $E$ denote the hidden representation width.
For one target--transmitter query, let $C_{\mathcal{S}}(M,K)$ denote the cost of the sampling operator in \eqref{ReferenceSet}.
The current $k$-NN implementation can achieve sublinear average retrieval cost in $M$ with spatial indexing structures such as $k$-d trees \cite{1975_KDTree}.
Node and edge encoding, attention scoring, and weighted aggregation on the sampled star graph require $O(KE^2)$ operations because the graph contains only $K$ target--reference edges.
The per-query inference cost is therefore
\begin{equation}
\label{eq:complexity}
O\!\left(
C_{\mathcal{S}}(M,K) + KE^2
\right).
\end{equation}
The overall training and testing costs scale linearly with the number of processed queries, and training adds a linear factor in the number of epochs.
For bounded $K$, the graph-side computation is linear in the sampled reference count and independent of $M$ after sampling.
The queries are mutually independent and can be processed in parallel for online target-wise inference.

This profile differs from those of representative learning-based baselines considered in Section~\ref{sec:results}.
The hierarchical design of 5GNN \cite{2022_5GNN} complements per-query local graphs with a global graph built over the query points, so that the global-stage message passing couples the queries and inference is naturally performed at the batch level over the entire query set.
Transformer-based estimators such as STORM \cite{2025_STORM} treat the participating reference observations as a joint token set, and the self-attention among the tokens incurs a per-query cost that grows quadratically with the number of involved references.
\section{Results and Performance Evaluation}\label{sec:results}
In this section, SeaGAT is evaluated against learning-based and classical baselines under the unified protocol. 
We first evaluate the contribution of individual design components through ablation studies; and then validate SeaGAT performance under different parametric configurations. 
Finally, the scenario-wise site-specific RME performance is introduced.

\subsection{Datasets and Preprocessing}
All schemes are evaluated under a single unified data-preparation
pipeline so that the comparison reflects modeling capability rather
than differences in data handling. Site-specific propagation data are
generated with the DeepMIMO ray-tracing tool~\cite{2019_DeepMIMO} under
three urban scenarios, namely Denver, New York, and Houston. Unless
otherwise stated, all transmitters operate at $3.5$ GHz; the Houston $28$ GHz variant is used only in the cross-frequency study. 
Each scenario contains three transmitters, and each transmitter is treated as one site; a transmitter and the user terminal at every target location are each equipped with a single antenna, and the RSS at a target is obtained by averaging the sub-carrier-level powers across the signal band. 
Constant $43$ dBm transmitting power is assumed for all transmitters. 
For each scenario we discretize the served area into $2\,\mathrm{m}\times 2\,\mathrm{m}$ grids, and the linear-scale RSS within a grid is averaged to a single value before being converted to dB.
 
\subsection{Observation Protocol and Data Splits}
\label{subsec:protocol}
For every transmitter, a fixed fraction $\eta$ of its grid points is
selected to form the transmitter-specific observation pool $O_b$; we refer to $\eta$ as the \emph{sampling ratio}. 
The selected points play a dual role: they constitute the shared reference pool $O_b$ from which kNN sampling draws evidence, and they also serve as the labeled queries seen during training. Concretely, when a selected point is used as a training query, it is removed from its own reference pool and predicted from the remaining selected points, so that no target is ever used as its own reference. All non-selected grid points are held out as evaluation queries and are never used for supervision; they are answered against the same observation pool $O_b$ at test time. This arrangement keeps the observation set identical across the training and evaluation stages and provides a common leakage-free reference-usage protocol for all learning-based baselines. Unless otherwise stated, the selected points are obtained by $k$-means sampling of the grid coordinates, which yields a spatially representative pool; the effect of replacing it with uniform random sampling is studied in the learned-attention and sampling-mismatch analysis. 
To capture split-induced variability, three independent selections are drawn with different random seeds, and every reported number is the mean $\pm$ standard deviation over these three seeds. We adopt a scenario-wise training protocol: one set of parameters is trained per scenario and shared across all transmitters of that scenario. 
The parameter studies are conducted on the Denver and New York scenarios; the per-transmitter comparison additionally includes Houston, and the cross-frequency study uses the Houston $28$ GHz domain. The root-mean-squared error (RMSE) in dB is used as the evaluation metric and, unless otherwise stated, is reported across all transmitters of a scenario.
 
\subsection{Training and Implementation Details}
\label{subsec:impl}
Both branches of SeaGAT operate on normalized inputs. The distance encoder uses the same range scale $d_0=1000$~m in all scenarios. With $B_d=256$, \eqref{eq:bucket_width} gives a first-bucket width of approximately $c/B_d\approx4$~m, concentrating resolution on nearby references while retaining an unbounded final bucket. 
The reference RSS is normalized with the Z-score method using statistics computed from the non-zero training-set RSS of each scenario. 
The bearing relation is encoded with $B_\theta=256$ circular buckets.
 
The hidden representation width is $E=128$. 
Each geometric relation is embedded into a $16$-dimensional vector, and the link-state agreement shares the same width; the resulting per-edge relation vector is mapped to the $E$-dimensional edge embedding by a single-hidden-layer perceptron of width $64$. 
The reference-node branch embeds the transmitter-centered coordinate and the normalized RSS, concatenates them, and fuses them through a single-hidden-layer perceptron of width $128$. 
The attention module is a single LeakyReLU-activated layer of width $64$ with negative slope $0.1$, applied to the concatenation of the node and edge embeddings, and the attention logits are normalized with a softmax temperature $\tau=0.7$. The transferred message is a linear map of the node embedding alone, and the readout that maps the aggregated representation of \eqref{eq:SeaGAT_target_rep} to the RSS estimate is a single-hidden-layer perceptron of width $E$. Unless stated otherwise, the default operating point uses $K=8$ references per query, sampling ratio $\eta=0.15$, and $k$-means sampling to form $O_b$; later experiments identify only the varied quantity or any deviation from this setting. Under this configuration, SeaGAT has approximately $120$k trainable parameters. 
It is worth noting that the resolution--model-size tradeoff is governed by $B_d$, e.g., with $16$-dimensional distance embeddings, $256$ and $128$ bins contribute $4096$ and $2048$ table parameters, respectively. The default setting therefore adds only $2048$ parameters over the $128$-bin variant, about $1.7\%$ of the complete model.
 
The model is trained to minimize a smooth-$L_1$ (Huber) regression objective. 
The Huber threshold is specified in the original RSS scale as $\delta_{\rm dB}=2$ dB and is converted to the
standardized scale as $\delta=\delta_{\rm dB}/\sigma_y$, where $\sigma_y$ is the training-set RSS standard deviation. 
 Optimization uses Adam with a learning rate of $5\times 10^{-4}$ and $(\beta_1,\beta_2)=(0.9,0.999)$ for $20$ epochs, with parameters updated on effective mini-batches of $128$ query samples and the training queries reshuffled every epoch. A held-out validation set of $20\%$ of the evaluation queries is scored every two epochs, and the checkpoint with the lowest validation RMSE is retained for testing. The learning-based baselines follow the same optimizer family and the same validation-based checkpoint selection, but their training budgets are not forced to match that of SeaGAT.
 This is because the baselines differ in architecture and convergence behavior; each is trained for as many epochs as its own validation curve requires to reach saturation, and several therefore use a substantially larger epoch count. 
For every learning-based method and random seed, each training-side configuration in the parameter studies is trained from scratch and undergoes independent validation-based checkpoint selection. 
\begin{table*}[!t]
\centering
\scriptsize
\caption{Ablation of SeaGAT components on the Denver and New York scenarios. RMSE is reported as mean $\pm$ std over three seeds; $\Delta$ denotes the RMSE change of each variant relative to the full model (positive means worse).}
\label{tab:SeaGAT_ablation}
%\resizebox{\textwidth}{!}{%
\begin{tabular}{lcccc}
\toprule
\textbf{Ablation setting} 
& \textbf{Denver} 
& $\Delta$
& \textbf{New York} 
& $\Delta$ \\
\midrule
\textbf{Full model} 
& $4.945 \pm 0.039$ 
& $0.000$ 
& $6.261 \pm 0.055$ 
& $0.000$ \\
\midrule
\multicolumn{5}{l}{\textit{Structure: aggregation and readout}} \\
\quad Uniform attention weights 
& $7.088 \pm 0.032$ 
& $+2.142$ 
& $7.839 \pm 0.063$ 
& $+1.578$ \\
\quad Linear readout (0 hidden layers) 
& $\mathbf{4.901 \pm 0.057}$ 
& $-0.044$ 
& $6.283 \pm 0.041$ 
& $+0.021$ \\
\quad Deeper readout (3 hidden layers) 
& $4.953 \pm 0.018$ 
& $+0.007$ 
& $6.296 \pm 0.077$ 
& $+0.035$ \\
\midrule
\multicolumn{5}{l}{\textit{Geometric relations on edges}} \\
\quad w/o target--reference relation 
& $5.979 \pm 0.068$ 
& $+1.034$ 
& $7.282 \pm 0.071$ 
& $+1.020$ \\
\quad All geometric relations 
& $5.086 \pm 0.057$ 
& $+0.140$ 
& $6.495 \pm 0.141$ 
& $+0.233$ \\
\quad + reference--transmitter relation 
& $5.095 \pm 0.051$ 
& $+0.149$ 
& $6.473 \pm 0.211$ 
& $+0.211$ \\
\quad + target--transmitter relation 
& $4.988 \pm 0.058$ 
& $+0.043$ 
& $6.345 \pm 0.127$ 
& $+0.084$ \\
\midrule
\multicolumn{5}{l}{\textit{Edge feature encoding}} \\
\quad Continuous distance 
& $5.523 \pm 0.085$ 
& $+0.577$ 
& $7.239 \pm 0.174$ 
& $+0.977$ \\
\quad Bucketized distance (128 bins) 
& $5.350 \pm 0.104$ 
& $+0.404$ 
& $6.547 \pm 0.132$ 
& $+0.285$ \\
\quad w/o link-state agreement 
& $5.630 \pm 0.059$ 
& $+0.685$ 
& $6.517 \pm 0.175$ 
& $+0.255$ \\
\quad Continuous angle 
& $5.069 \pm 0.022$ 
& $+0.123$ 
& $\mathbf{6.195 \pm 0.115}$ 
& $-0.067$ \\
\quad Bucketized angle (128 bins) 
& $4.976 \pm 0.076$ 
& $+0.031$ 
& $6.300 \pm 0.031$ 
& $+0.039$ \\
\bottomrule
\end{tabular}%
%}
\end{table*}

\subsection{Baseline Reproduction under the Unified Protocol}\label{subsec:baselines}
In practical deployments, building maps may be unavailable, outdated, or costly to acquire and maintain. 
We therefore adopt a building-map-free setting in which no method evaluated in this section is provided with building-map information as an input during training or inference.  
\subsubsection{5GNN}
We reproduce 5GNN~\cite{2022_5GNN} with its local--global GCN structure. Each query forms a local graph from the $K_{\rm local}$ nearest references with Gaussian radial-basis edge weights; the query RSS is masked, while reference nodes carry coordinates and observed RSS. Local chart embeddings are combined with a second GCN over a $K_{\rm global}$ query graph and mapped to RSS. 
For the common protocol, $K_{\rm local}=K$, with $K_{\rm local}=8$ at the default operating point; we report $K_{\rm global}\in\{5,10\}$ and train with Adam for $100$ epochs. The departures from the original study are the common reference budget, data split, and random seeds in Section~\ref{subsec:protocol}.
\subsubsection{CGFormer}
CGFormer uses cross-attention between a queried coordinate and sparse measurement tokens for grid-free prediction~\cite{CGFormer}. We retain its sampling-mask branch. The mask represents the complete observation-pool geometry, whereas the key/value context is restricted to the common $K$ nearest references. A training query is removed from both inputs to preserve leave-one-out prediction. 
The model uses mean-squared error and Adam with learning rate $5\times10^{-4}$, batch size $64$, and $200$ epochs.
\subsubsection{STORM}
STORM~\cite{2025_STORM} centers references on the target, rotates them by the measurement-weighted dominant direction, and augments the tokens with distance and angular features. We retain its two-head, four-block causal transformer with embedding width $48$ and prefix-wise supervision. The original square-patch context is replaced by the common $K$ nearest references, which preserves direct budget comparability and confines self-attention to the selected set. The model is trained with mean-squared error.
\subsubsection{RobUNet}
We reproduce the RobUNet encoder--decoder, residual double-convolution blocks, and channel/pixel attention~\cite{2024_RobUNet}. Its input is a single-channel sparse RSS map. Under the unified protocol, each supervised observation is removed from the input and predicted through pointwise masked self-supervision, preventing target leakage. Adam training is run for $100$ epochs, where validation performance saturates.

\subsubsection{Classical interpolation baselines}
We further compare against four standard non-learning interpolators, each applied independently per transmitter to the same observation set and evaluated at the same query locations as the learning-based methods. Ordinary kriging (OK) treats the observed RSS as a realization of a stationary random field with an exponential semivariogram,
$\gamma(h)=c_0+c\,[\,1-\exp(-h/a)\,]$, whose nugget $c_0$, partial sill $c$, and range $a$ are fit by weighted least squares to a $12$-bin empirical semivariogram; each query is then predicted through the ordinary-kriging system under the unbiasedness constraint. Universal kriging (UK) augments this model with a linear coordinate trend $[\,1,x,y\,]$, estimated by ridge-regularized least squares, and fits the same exponential semivariogram to the detrended residuals. The $K$-nearest-neighbor (KNN) estimator predicts each query as the unweighted mean of its eight nearest observations under the Euclidean metric. IDW forms a normalized weighted average over all valid observations of the designated transmitter,
\begin{equation}
\hat y_t^{\mathrm{IDW}}
=
\frac{\sum_{r\in R_b^{\mathrm{obs}}}w_{t,r}y_r^{(b)}}
{\sum_{r\in R_b^{\mathrm{obs}}}w_{t,r}},
\qquad
w_{t,r}=(d_{t,r}+10^{-12})^{-2}.
\end{equation}
If one or more references have zero distance to the query, their mean is returned as the zero-distance limit. None of these baselines involves iterative training. OK, UK, and IDW use the complete transmitter-specific observation pool for each prediction; KNN uses eight local observations; the pointwise learning baselines use $K=8$ measurement references at the default operating point, while RobUNet receives the complete sparse map under its masked supervision.
\subsection{Results and Performance Analysis}
\paragraph{Ablation of SeaGAT components}
As for the ablation, our aim is to validate the architecture design through three aspects illustrated in Table~\ref{tab:SeaGAT_ablation}. 
Each row changes only the listed component from the default SeaGAT configuration in Section~\ref{subsec:impl}; $\Delta$ denotes the corresponding RMSE change, with positive values indicating degradation.

\emph{Aggregation and readout structure.}
Uniform attention produces the largest degradation, with $\Delta=+2.142$ dB in Denver and $\Delta=+1.578$ dB in New York. 
It fixes the barycentric coefficients at $1/K$ and removes score-driven allocation. 
By comparison, linear and three-layer readouts remain within $0.044$ dB of the default head in Denver and $0.035$ dB in New York. 
The gain therefore arises primarily from learned score-driven evidence allocation before the readout, with limited sensitivity to post-aggregation depth.

\emph{Geometric relations on edges.}
Removing query--reference distance and bearing increases RMSE by $1.034$ dB in Denver and $1.020$ dB in New York. 
These relations change the score and therefore the aggregate-shift coefficient in \eqref{eq:relation_score_shift}. 
Adding reference--transmitter, target--transmitter, or both relation sets instead produces small degradations of $0.043$--$0.233$ dB. 
Reference--transmitter geometry is already present in $\mathbf{p}_{r}^{(b)}$, while target--transmitter geometry is common to all references of a query. The results support the compact query--reference edge relation.

\emph{Edge feature encoding.}
Distance encoding is the most sensitive relation-encoding choice. 
With the fixed scale $c$, the $256$- and $128$-bin settings provide near-range resolutions of approximately $4$ and $8$~m, respectively. 
A continuous distance input increases RMSE by $0.577/0.977$~dB in Denver/New York, while $128$-bin bucketization costs $0.404/0.285$~dB. 
As quantified in Section~\ref{subsec:impl}, the finer table adds only $2048$ parameters, so the default setting provides a favorable accuracy--model-size tradeoff. 
Removing link-state agreement costs $0.685/0.255$ dB. Bearing encoding is less sensitive: continuous and 128-bin variants change RMSE by at most $0.123$ dB.
\begin{figure}[!t]
	\centering
	\includegraphics[width=0.47\textwidth]{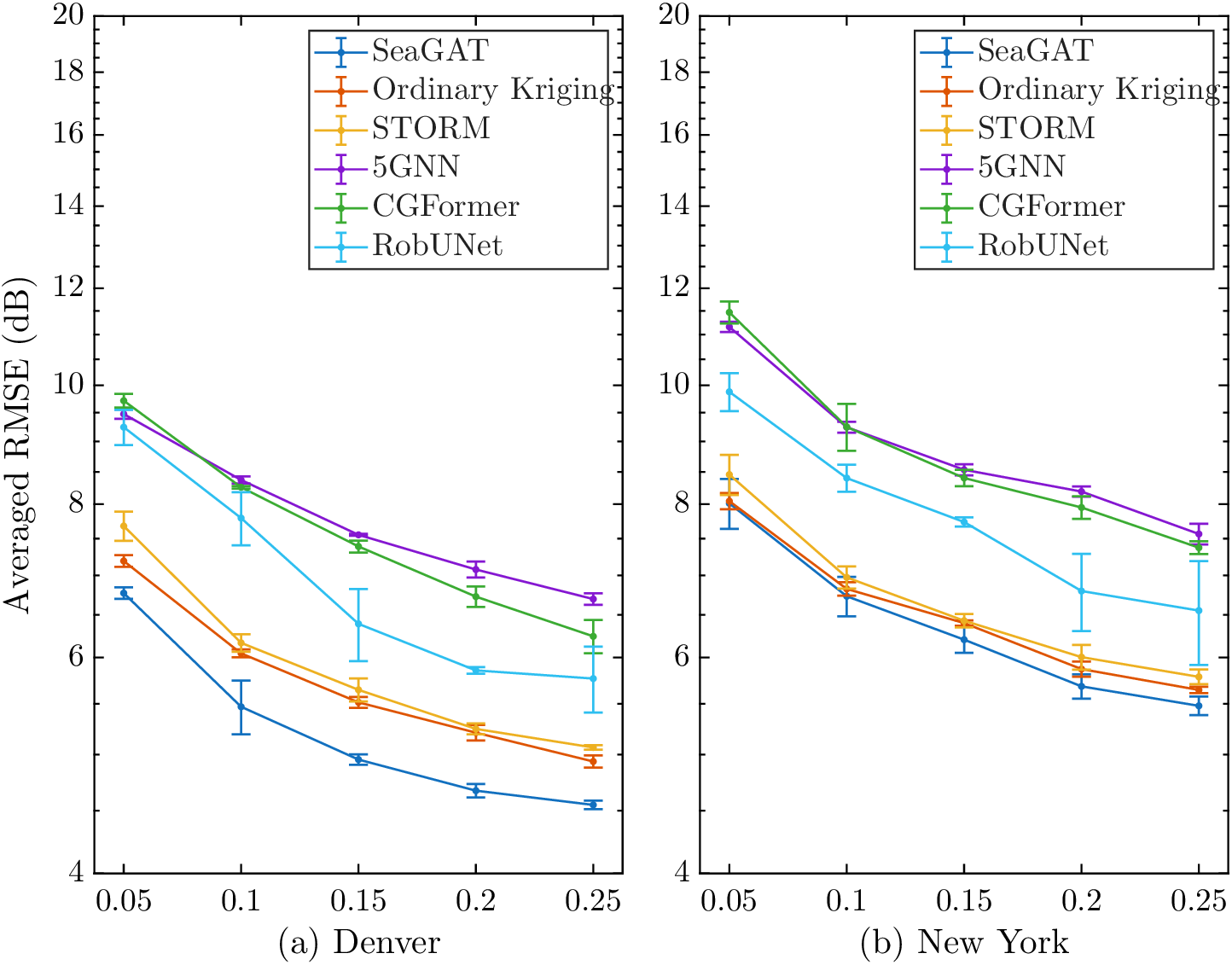}
	\caption{RMSE versus the sampling ratio $\eta$ in the Denver and New York scenarios. For pointwise learning-based methods, $K=8$ references are selected.}
	\label{RMSEvsSamplingRatio}
\end{figure}

\paragraph{RME performance evaluation under varying parameters}

\begin{figure}[!t]
	\centering
	\includegraphics[width=0.47\textwidth]{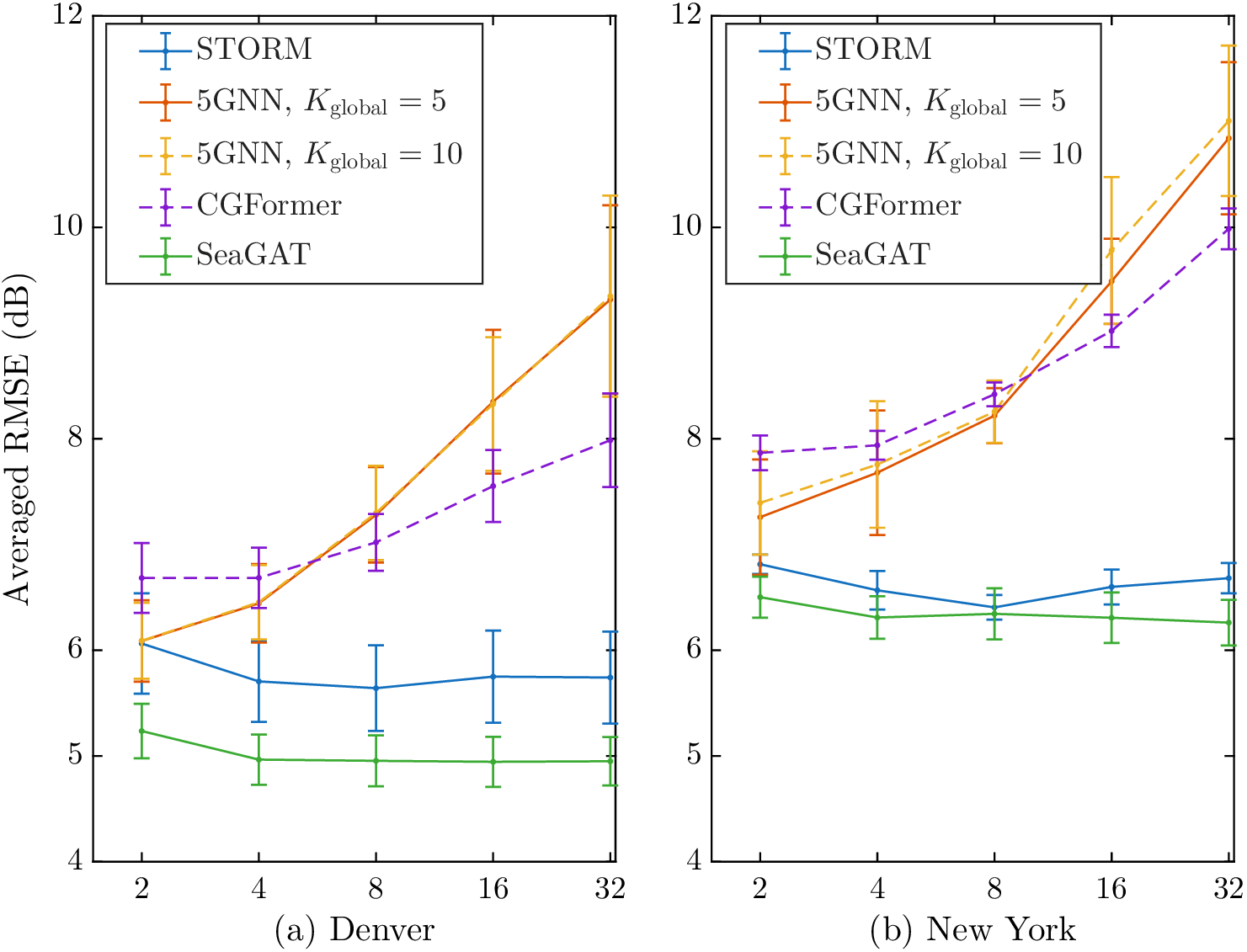}
	\caption{RMSE versus the number of kNN-selected references $K$ in the Denver and New York scenarios, at sampling ratio $\eta=0.15$.}
	\label{RMSEvsReferences}
\end{figure}
\figref{RMSEvsSamplingRatio} illustrates the RME performance under different sampling ratios. A larger $\eta$ simultaneously provides more labeled training queries and denser spatial measurement coverage, reducing the average distance from an evaluation target to the available observations at the expense of higher acquisition cost. 
The average RMSE therefore decreases with $\eta$ for every method in both scenarios, while the marginal improvement becomes smaller for most methods once $\eta\geq0.15$. 
SeaGAT achieves the lowest mean RMSE throughout the evaluated range; at $\eta=0.25$, it reaches approximately $4.5$ dB in Denver and $5.5$ dB in New York. 
CGFormer also benefits markedly from denser observations, yet remains above SeaGAT and STORM across the tested ratios. 
This ordering indicates that measurement density and the aggregation rule make distinct contributions to estimation accuracy.

\figref{RMSEvsReferences} examines the per-query reference budget $K$. SeaGAT reaches near-saturated accuracy by $K=4$ and remains the lowest-RMSE method in both scenarios; STORM also varies only mildly beyond this point. By contrast, the RMSE of CGFormer and 5GNN increases as larger reference sets are introduced, with the effect particularly visible in New York. Under the independently trained and validation-selected protocol in Section~\ref{subsec:impl}, this degradation reflects optimized performance with the larger context and cannot be attributed to a test-time reference-count mismatch. The result suggests that additional observations can introduce weakly relevant context that the aggregation mechanism does not fully suppress. SeaGAT's lower sensitivity is consistent with learned score-driven evidence allocation informed by explicit query--reference relations.

\paragraph{Learned attention behavior}
\begin{figure}[!t]
	\centering
	\includegraphics[width=\columnwidth]{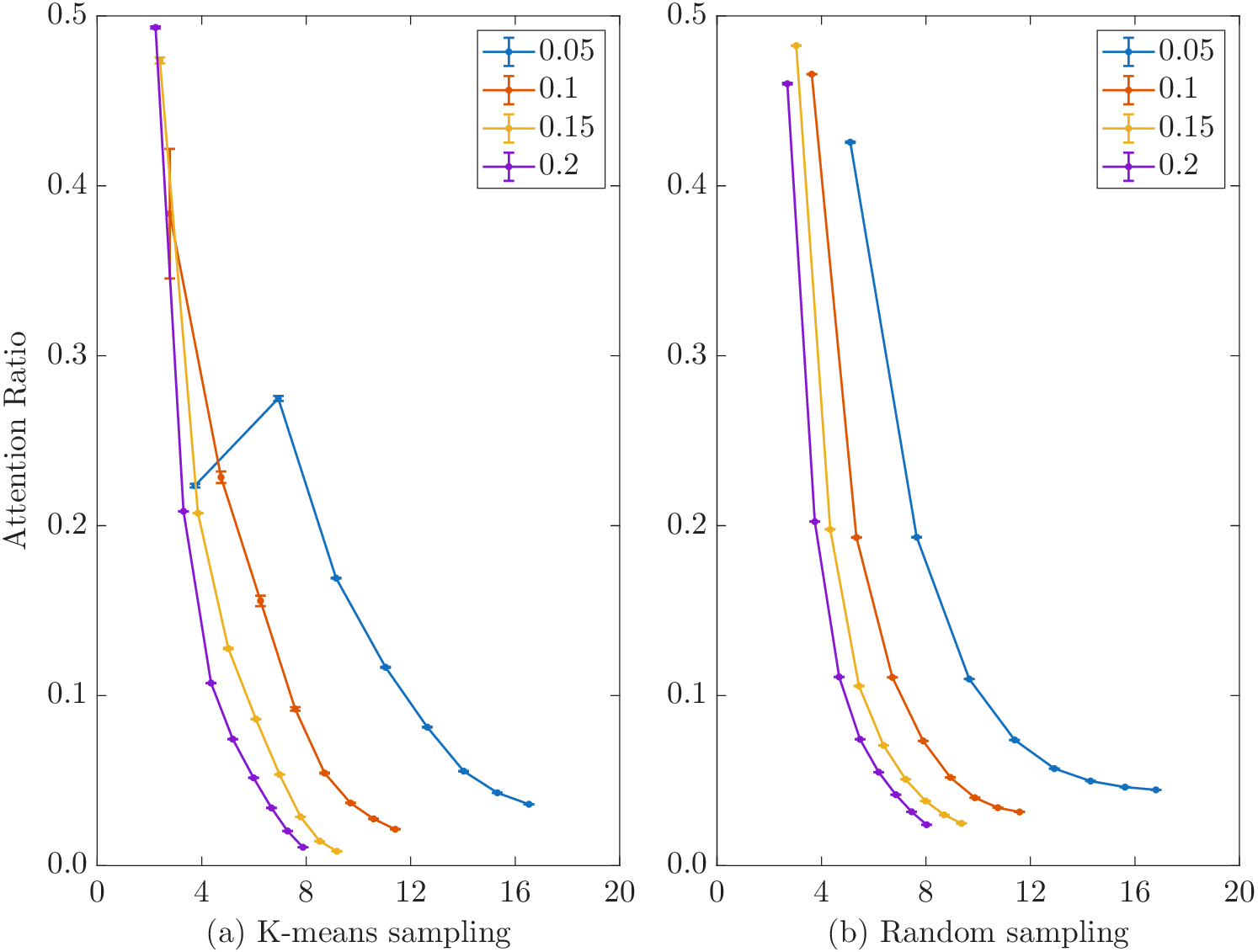}
	\caption{Attention weight versus the averaged target--reference distance (m), under $k$-means and random reference sampling, for several sampling ratios $\eta$ at $K=8$ in the Denver scenario.}
	\label{Attention_vs_AvgRefTgtDistance}
\end{figure}

\begin{figure}[!t]
	\centering
	\includegraphics[width=\columnwidth]{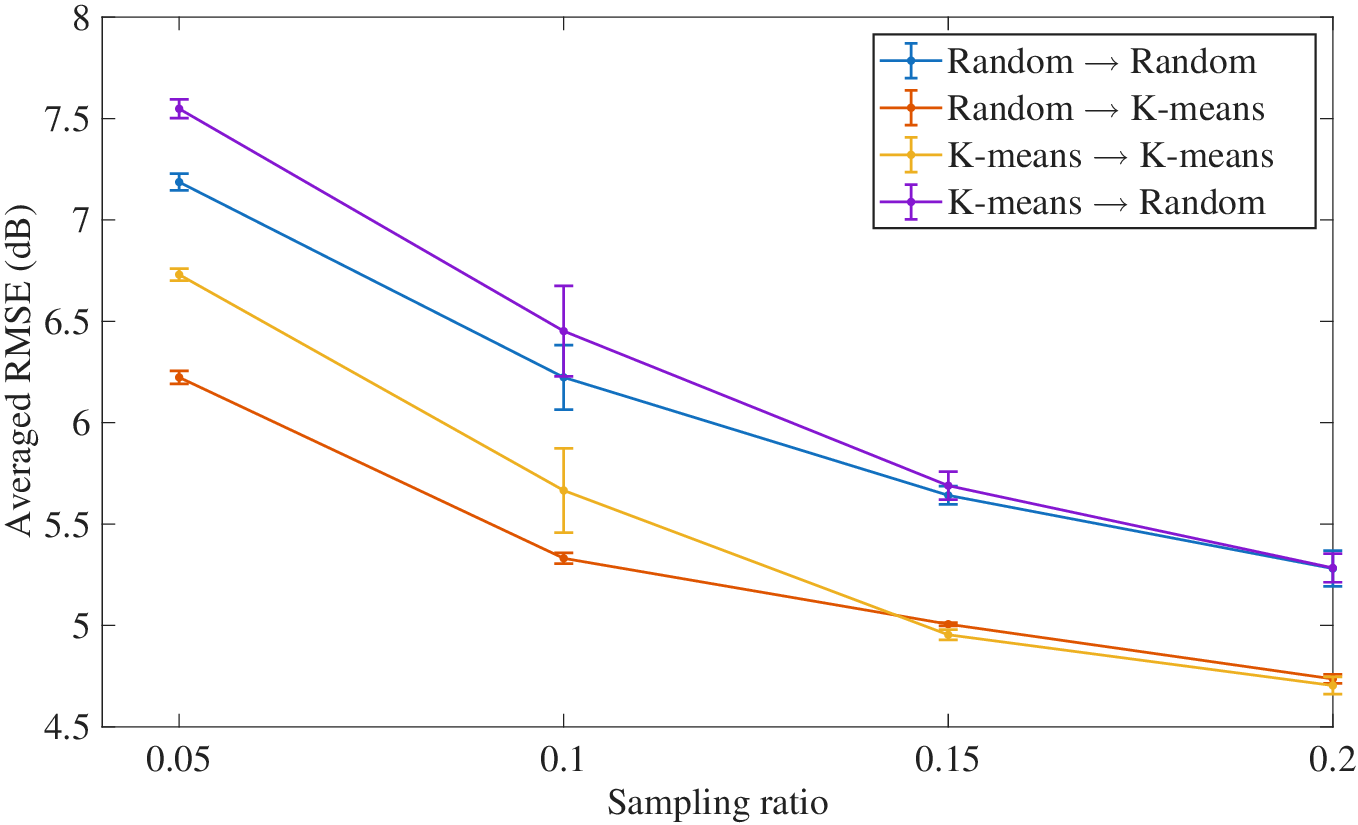}
	\caption{Effect of mismatched training/test reference sampling. RMSE
versus the sampling ratio $\eta$ for the four train$\rightarrow$test
combinations of random and $k$-means sampling, in the Denver scenario
at $K=8$.}
	\label{SamplingMethod_Train_vs_Test}
\end{figure}

\figref{Attention_vs_AvgRefTgtDistance} illustrates the calculated attention weight assigned to each reference, as a function of the averaged distance between each reference and the target. Each curve corresponds to a particular sampling ratio and is obtained by averaging over different targets. 

It can be observed that the attention weight assigned to the references generally follows the trend of \textit{the closer the distance, the higher the learned attention weight}. 
However, when $\eta=0.05$ with $k$-means sampling, the learned attention assigns a higher weight to the second nearest reference than to the nearest reference. 
When $\eta$ increases to $0.1$, the mean attention weight assigned to the nearest reference increases to $0.38$ but the seed-wise variance is still noticeable. 
As for the case with random sampling, consistent attention behavior is observed across all cases even though the averaged distance between the target and each reference increases with the reduction of $\eta$. 

The sampling-mismatch experiment in \figref{SamplingMethod_Train_vs_Test} compares the four train$\rightarrow$test combinations of random and $k$-means sampling.
The training sampler has negligible impact when $\eta\geq 0.15$, while a visible gap appears at lower sampling ratios.
At low reference ratios, this gap coincides with the less regular attention profile learned from $k$-means samples, including repeated cases in which the nearest reference receives a lower weight than the second nearest reference.
Across all train$\rightarrow$test combinations, $k$-means sampling at inference improves RMSE by approximately $0.5$ dB, indicating that its spatially representative reference set provides stronger evidence under the same budget.

\paragraph{Cross-scenario generalization capability}
In practical measurement-driven RME, retraining a model to fit different data distributions can be costly. 
A desirable estimator should therefore learn reference-to-target propagation relations that remain useful when the deployment scenario differs from the training scenario. 
We evaluate the cross-scenario generalization capability of each method by transferring the trained model parameters from source domains with different urban layouts and carrier frequencies to a common Houston 28 GHz target domain.
Specifically, each method is trained separately on Denver 3.5 GHz, New York 3.5 GHz, and Houston 3.5 GHz, and is then evaluated on Houston 28 GHz. The model trained directly on Houston 28 GHz is also reported as an in-domain reference.

As illustrated in \figref{crossscenario}, SeaGAT exhibits the most consistent cross-domain behavior among the compared learning-based methods.
For all source training domains and all tested values of $K$, SeaGAT achieves the lowest mean RMSE on the Houston 28 GHz target domain. 
More importantly, the SeaGAT curves obtained from different source domains almost overlap with each other and remain close to the in-domain Houston 28 GHz reference. 
Among the other attention-based query estimators, STORM and CGFormer exhibit different transfer behaviors. STORM maintains relatively stable performance across source domains, but its RMSE remains consistently higher than SeaGAT, indicating that its invariant coordinate representation improves robustness while providing limited adaptation to domain-specific propagation relations. 
CGFormer benefits from cross-attention based reference aggregation, yet its larger variation across source domains and reference budgets indicates that its learned query-reference interaction is more dependent on source-domain representation learning. 
In contrast, 5GNN is highly sensitive to both the source domain and the number of selected references.
As discussed earlier, under distribution shift, introducing more neighboring references may inject less transferable or even misleading messages into the graph update, which weakens the contribution of the most relevant local evidence.

\begin{figure}[!t]
	\centering
	\includegraphics[width=\columnwidth]{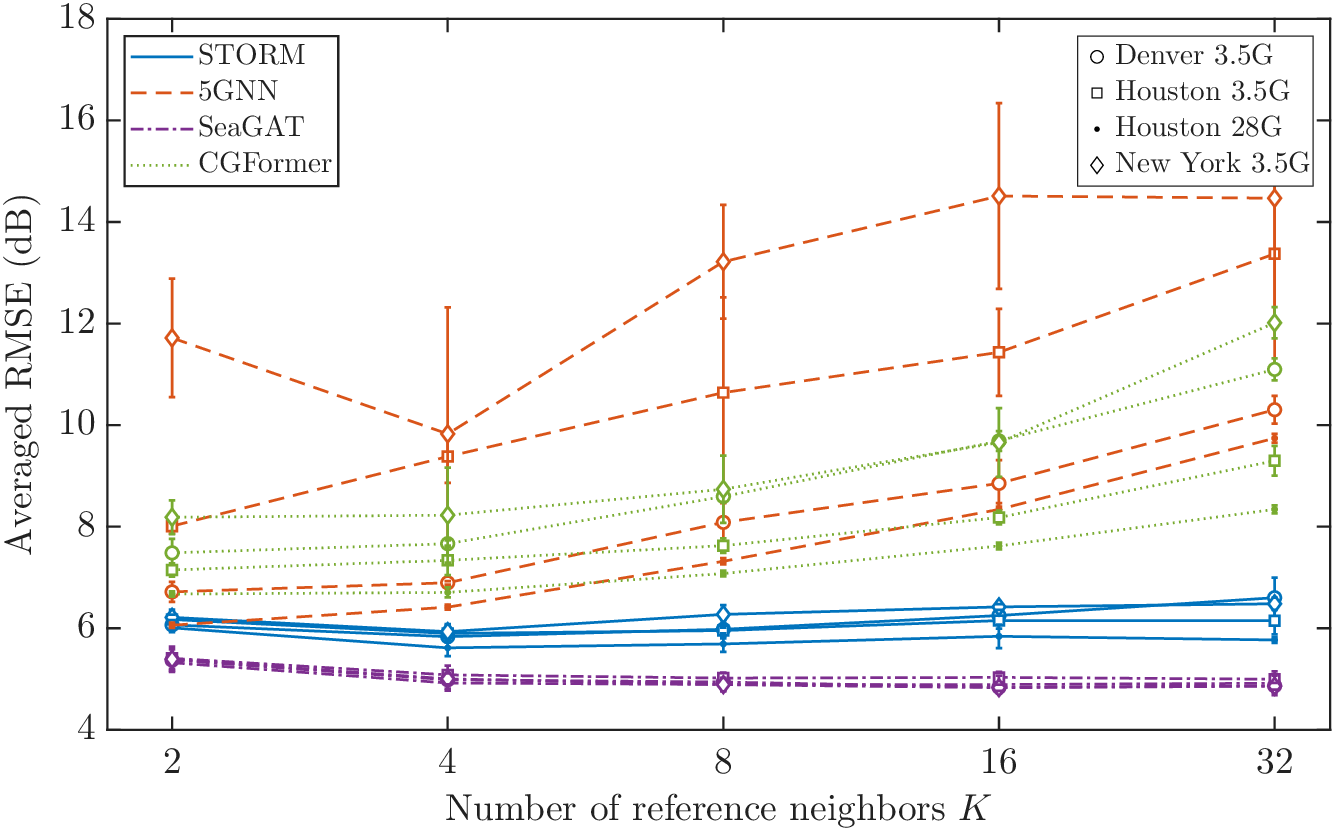}
	\caption{Cross-domain generalization on the Houston $28$ GHz target
domain at sampling ratio $\eta=0.15$. Line styles denote methods and markers denote the source training domain; the Houston $28$ GHz source is the in-domain reference.}
	\label{crossscenario}
\end{figure}

\paragraph{Site-wise performance evaluation}
Within the same scenario, different transmitters may exhibit markedly different propagation patterns because of local geometry and blockage conditions. Table~\ref{table:overall} reports the corresponding transmitter-wise RMSE. SeaGAT achieves the lowest mean RMSE on $7$ of the $9$ transmitters and leads all three transmitters in Denver and Houston. At the first two New York transmitters, OK provides slightly lower RMSE than SeaGAT. 
Among the learning-based baselines, STORM is the closest comparator across the nine transmitters, while CGFormer remains higher in RMSE under the same bounded measurement-reference protocol. IDW records the largest RMSE on all nine transmitters. Its fixed isotropic distance-decay law uses the complete observation pool and cannot adapt the relevance rule to blockage boundaries or directional propagation structure. Overall, SeaGAT provides the strongest performance across the evaluated transmitters, with localized complementary behavior from covariance-based kriging.

The representative maps in Figs.~\ref{Denver_site1} and~\ref{Houston_site1} support the same overall pattern and expose two recurrent error modes: local artifacts and excessive smoothing. In the Denver example, SeaGAT follows the narrow high-power corridors and preserves abrupt transitions around buildings more closely than the competing methods. 5GNN and CGFormer exhibit local intensity fluctuations, while KNN and IDW produce more pronounced patch-like artifacts. STORM, RobUNet, OK, and UK generate smoother fields but blur part of the transition structure. Some weak reflections remain unresolved by SeaGAT, indicating that the selected references and current relation attributes provide limited evidence for these paths.

The Houston example contains a more spatially continuous strong-signal region. SeaGAT again retains its dominant geometry and the blockage-induced boundaries with comparatively limited artifacts. The competing methods display the same two tendencies observed in Denver: local deviations for the measurement-aggregation estimators and boundary diffusion for the smoother map or covariance-based reconstructions. SeaGAT shows localized deviations near several building edges, leaving scope for broader sampling or richer propagation relations.

\begin{table*}[!t]
\centering
\caption{Per-transmitter RMSE (mean $\pm$ std, decibel) across scenarios at sampling ratio $\eta=0.15$ and $K=8$. The best result for each transmitter is in bold.}
\label{table:overall}
\scriptsize
\setlength{\tabcolsep}{3.5pt}
\renewcommand{\arraystretch}{1.15}
\resizebox{\textwidth}{!}{
\begin{tabular}{c c c c c c c c c c c}
\toprule
Scenario & Tx. & SeaGAT & 5GNN & CGFormer & STORM & RobUNet & OK & UK & kNN & IDW \\
\midrule

\multirow{3}{*}{\rotatebox[origin=c]{90}{\textbf{Denver}}}
& 1 & \textbf{5.207 $\pm$ 0.080} & 7.584 $\pm$ 0.113 & 7.174 $\pm$ 0.068 & 5.851 $\pm$ 0.063 & 6.329 $\pm$ 0.244 & 5.881 $\pm$ 0.137 & 5.571 $\pm$ 0.080 & 7.545 $\pm$ 0.053 & 8.565 $\pm$ 0.020 \\
& 2 & \textbf{4.686 $\pm$ 0.069} & 6.842 $\pm$ 0.168 & 6.694 $\pm$ 0.133 & 5.173 $\pm$ 0.044 & 7.071 $\pm$ 1.496 & 4.931 $\pm$ 0.072 & 4.931 $\pm$ 0.072 & 6.407 $\pm$ 0.043 & 8.301 $\pm$ 0.158 \\
& 3 & \textbf{5.023 $\pm$ 0.106} & 7.506 $\pm$ 0.119 & 7.193 $\pm$ 0.163 & 5.989 $\pm$ 0.107 & 6.287 $\pm$ 0.482 & 5.733 $\pm$ 0.117 & 5.733 $\pm$ 0.119 & 7.436 $\pm$ 0.054 & 8.656 $\pm$ 0.023 \\
\midrule

\multirow{3}{*}{\rotatebox[origin=c]{90}{\textbf{NY}}}
& 1 & 6.418 $\pm$ 0.136 & 7.973 $\pm$ 0.293 & 8.377 $\pm$ 0.117 & 6.523 $\pm$ 0.188 & 7.185 $\pm$ 0.156 & \textbf{6.371 $\pm$ 0.032} & 6.372 $\pm$ 0.034 & 7.705 $\pm$ 0.038 & 8.851 $\pm$ 0.147 \\
& 2 & 6.479 $\pm$ 0.044 & 8.366 $\pm$ 0.419 & 8.463 $\pm$ 0.164 & 6.507 $\pm$ 0.121 & 7.971 $\pm$ 1.288 & \textbf{6.334 $\pm$ 0.083} & 6.336 $\pm$ 0.078 & 8.040 $\pm$ 0.095 & 9.247 $\pm$ 0.115 \\
& 3 & \textbf{6.027 $\pm$ 0.064} & 8.490 $\pm$ 0.193 & 8.423 $\pm$ 0.067 & 6.565 $\pm$ 0.079 & 7.296 $\pm$ 0.357 & 6.496 $\pm$ 0.085 & 6.501 $\pm$ 0.084 & 7.988 $\pm$ 0.072 & 9.492 $\pm$ 0.061 \\
\midrule

\multirow{3}{*}{\rotatebox[origin=c]{90}{\textbf{Houston}}}
& 1 & \textbf{4.361 $\pm$ 0.172} & 7.063 $\pm$ 0.065 & 6.722 $\pm$ 0.200 & 5.700 $\pm$ 0.307 & 6.413 $\pm$ 1.198 & 5.458 $\pm$ 0.142 & 5.422 $\pm$ 0.133 & 7.154 $\pm$ 0.126 & 8.176 $\pm$ 0.143 \\
& 2 & \textbf{5.087 $\pm$ 0.192} & 7.000 $\pm$ 0.130 & 7.244 $\pm$ 0.105 & 5.794 $\pm$ 0.136 & 6.296 $\pm$ 0.199 & 7.012 $\pm$ 0.389 & 5.367 $\pm$ 0.177 & 6.958 $\pm$ 0.080 & 8.269 $\pm$ 0.127 \\
& 3 & \textbf{5.513 $\pm$ 0.015} & 7.745 $\pm$ 0.276 & 7.483 $\pm$ 0.065 & 6.068 $\pm$ 0.140 & 6.751 $\pm$ 0.349 & 5.635 $\pm$ 0.109 & 5.635 $\pm$ 0.109 & 7.542 $\pm$ 0.075 & 9.029 $\pm$ 0.127 \\

\bottomrule
\end{tabular}
}
\end{table*}

\begin{figure*}[!t]
	\centering
	\includegraphics[width=\textwidth]{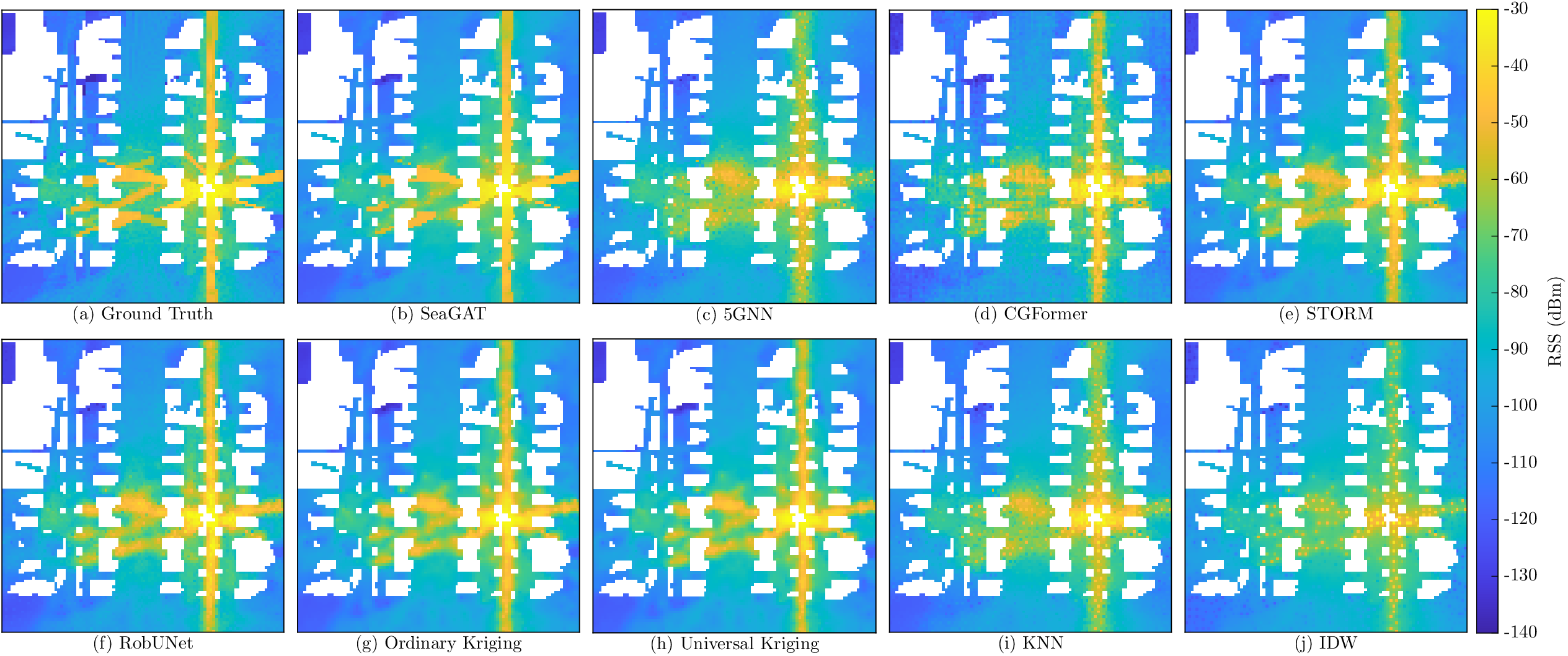}
	\caption{Qualitative comparison on a representative Denver transmitter. Estimated radio maps from SeaGAT and the baselines are shown against the ground truth.}
	\label{Denver_site1}
\end{figure*}
\begin{figure*}[!t]
	\centering
	\includegraphics[width=\textwidth]{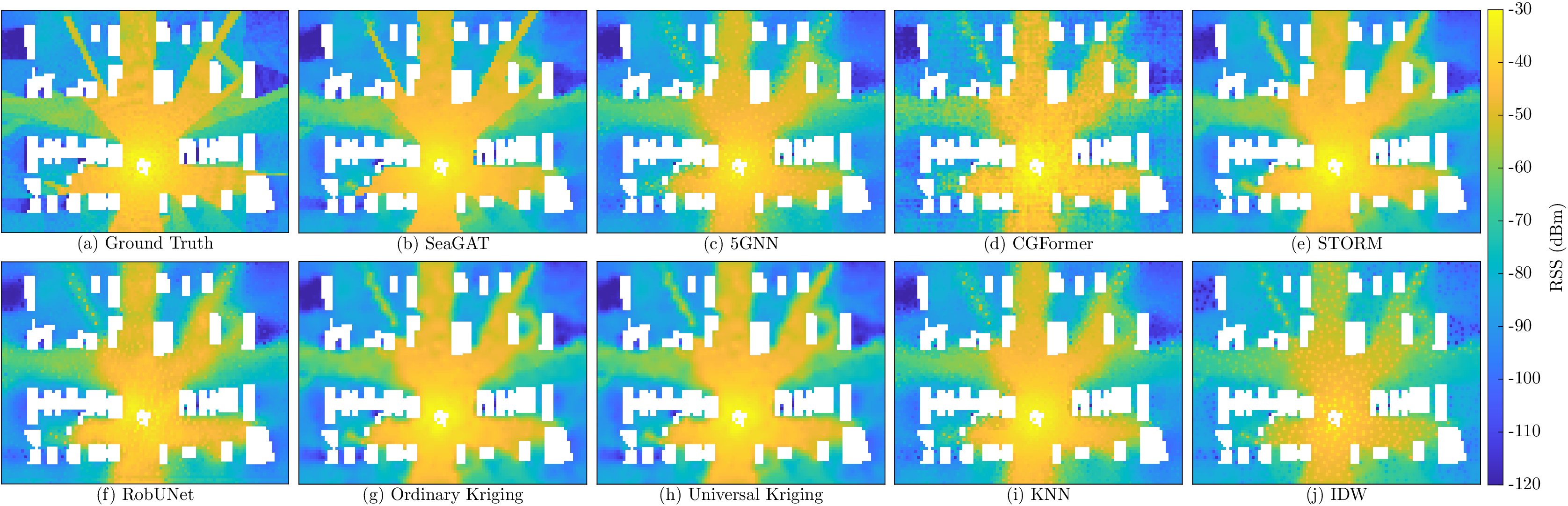}
	\caption{Qualitative comparison on a representative Houston transmitter. Estimated radio maps from SeaGAT and the baselines are shown against the ground truth. }
	\label{Houston_site1}
\end{figure*}
\section{Conclusion}\label{sec:conclusion}

This paper proposed SeaGAT for transmitter-resolved pointwise radio map estimation from sparse measurements. 
SeaGAT uses a bounded target-centered star graph to transfer node-derived measurement evidence through attention weights conditioned on explicit query--reference relations. Within a fixed support, the analysis shows that relation perturbations redistribute existing evidence, whereas reference-RSS perturbations affect both message content and relevance scoring; the bounded graph keeps graph-side computation linear in $K$, with support-truncation deviation controlled by omitted attention mass and cross-support message separation. At $\eta=0.15$ and $K=8$, SeaGAT achieves the lowest transmitter-wise RMSE in $7$ of $9$ cases. Replacing learned attention with uniform weights increases RMSE by $1.58$--$2.14$~dB, while removing query--reference relations increases it by $1.02$--$1.03$~dB. SeaGAT also achieves the lowest mean RMSE for every tested $K$ and every source training domain in transfer to Houston $28$~GHz. These results support structured query-conditioned evidence aggregation as an effective and transferable design for transmitter-resolved pointwise RME.

\bibliographystyle{IEEEtran}
\bibliography{RadioMap}

\end{document}